\declaretheoremstyle[
	spaceabove=6pt, spacebelow=6pt,
	headfont=\normalfont\bfseries,
	notefont=\mdseries\bfseries, 
	notebraces={-- }{},
	headpunct=,
	bodyfont=\normalfont,
	postheadspace=\newline
]{mystyle}		
\declaretheorem[
	style=mystyle,
	name=Definition,
	numbered=yes,
	numberwithin=section]{defn}
\declaretheorem[style=mystyle,name=Proposition]{prop}
\declaretheorem[style=mystyle,name=Corollary]{corol}
\declaretheorem[style=mystyle,name=Example]{example}
\newcommand{\KP}[1]{{\color{orange}  #1}}
\renewcommand{\int}[1]{[#1]}
\DeclarePairedDelimiter\ceil{\lceil}{\rceil}
\DeclarePairedDelimiter\floor{\lfloor}{\rfloor}
\newcommand\restr[2]{{
  \left.\kern-\nulldelimiterspace 
  #1 
  \vphantom{\big|} 
  \right|_{#2} 
  }}
\begin{document}

\title{Solving decision problems by distributed consensus with one-dimensional, binary, radius-$2$ cellular automata over cyclic configurations}
\date{}

\author[ ]{\footnote{Corresponding author.}{Eurico Ruivo$^{1}$}}
\author[ ]{Pedro Paulo Balbi$^{1}$}
\author[ ]{Kévin Perrot$^{2}$}
\author[ ]{Marco Montalva-Medel$^{3}$}
\author[ ]{Eric Goles$^{3}$}
\affil[1]{Universidade Presbiteriana Mackenzie, Faculdade de Computa\c{c}\~{a}o e Inform\'{a}tica}
\affil[ ]{Rua da Consola\c{c}\~{a}o 896, Consola\c{c}\~{a}o;  01302-907 S\~{a}o Paulo, SP, Brazil}
\affil[2]{Aix Marseille Univ, CNRS, LIS, Marseille, France}
\affil[ ]{Avenue de Luminy 163; F-13288 Marseille Cedex 9, France}
\affil[3]{Universidad Adolfo Ib\'{a}\~{n}ez, Facultad de Ingenier\'{i}a y Ciencias}
\affil[ ]{Av. Diagonal Las Torres 2700, 7941169, Pe\~{n}alol\'{e}n, Santiago, Chile}
\affil[ ]{\texttt{\small eurico.ruivo@mackenzie.br, pedrob@mackenzie.br, kevin.perrot@lis-lab.fr, marco.montalva@uai.cl, eric.goles@uai.cl}}

\maketitle

\begin{abstract}
Probing the ability of automata networks to solve decision problems has received a continuous attention in the literature, and specially with the automata reaching the answer by distributed consensus, i.e., their all taking on a same state, out of two. In the case of binary automata networks, regardless of the kind of update employed, the networks should display only two possible attractors, the fixed points $0^L$ and $1^L$, for all cyclic configurations of size $L$. A previous investigation into the space of one-dimensional, binary, radius-2 cellular automata identified a restricted subset of rules as potential solvers of decision problems, but the reported results were incomplete and lacked sufficient detail for replication. To address this gap, we conducted a comprehensive reevaluation of the entire radius-2 rule space, by filtering it with all configuration sizes from 5 to 20, according to their basins of attraction being formed by only the two expected fixed points. A set of over fifty-four thousand potential decision problem solvers were then obtained. Among these, more than forty-five thousand were associated with 3 well-defined decision problems, and precise formal explanations were provided for over forty thousand of them. The remaining candidate rules suggest additional problem classes yet to be fully characterised. Overall, this work substantially extends the understanding of radius-2 cellular automata, offering a more complete picture of their capacity to solve decision problems by consensus.

{\it Keywords}: Distributed consensus, decision problem, cellular automata, cyclic configuration.
\end{abstract}

\section{Introduction}
\label{sec:introduction}

Probing the ability of automata networks, specially cellular automata, to solve decision problems has received a continuous attention in the literature. Given the intrinsic local processing nature of the automata in the network, in order to facilitate the process of reaching the answer of the problem the decision process has taken the form of all automata reaching a distributed consensus, that is, their all taking on a same state, out of two, that would mean one solution or the other.

In the case of binary automata networks, the most usual situation, regardless of the kind of update employed, the networks should display  only two possible attractors, the fixed points $0^L$ and $1^L$, for all cyclic configurations of size $L$.

The efforts in the literature related to automata networks for solving decision problems along this line has focused on two possibilities, according to the kind of decision the network would have to make with respect to the initial configuration: the characterisation of the problem in terms of a formal language the network would be recognising or not \cite{mining2018,portfolio2018}, or in classical benchmark problems, like the determination in the initial configuration of the parity of 1s \cite{Parity150&Shifts2024,correctedBFO_arXiv2025} or the prevailing state in number of occurrences \cite{SurveyDCT_JCA2014,negativeDCT2024}.

With respect to decision problems characterised as formal language recognition, \cite{mining2018} provides complete results in terms of regular languages recognised by one-dimensional, binary cellular automata, with radii 0.5, 1 and 1.5 over cyclic configurations and synchronous update. That reference also addressed the radius 2 case. In fact, it is reported therein that out of the entire radius 2 space (composed of $2^{2^5}$ = 4,294,967,296 rules) computational and analytical pruning processes led to 40,941 candidate rules as possible decision problem solvers, out of which 1,121 have been completely characterised in terms of a same language recognition problem they were related with (namely, the presence of at least one 0 in the initial configuration). However, not enough detail was provided in that reference that would allow for a precise replication of the latter two rule quantities, and a revision we performed on the data employed therein did not lead to any further clarification. 

In face of this, the results in \cite{mining2018} required a reevaluation and this is exactly what the present paper is all about. In order to go about it we started from scratch, with all $2^{2^5}$ rules of the space, only trimmed by taking the representative of the classes of dynamical equivalence -- due to negation, reflection, or the composition of both -- chosen as the one with the smallest Wolfram number in the class. Then we filtered the previous set towards identifying likelier consensus type rules, by computational checking the basins of attraction of all the rules for all configuration sizes $L$ from $L=5$ up to $L=20$ (stricter than in \cite{mining2018}, where it went only up to $L=16$); this left 54,928 candidate rules. Out of the latter, 45,699 of them were identified as candidates for 3 well-defined decision problems, leaving the remaining 9,229 as candidates for other decision problems. Finally, out of the 45,699 candidates in the 3 classes, we provided precise formal explanation about the dynamics of 40,254 of them. So, as will become clear herein, our reevaluation of radius 2 space significantly extended what was known about the space of possible decision problems in the space, by settling the matter on many more rules and at the same time identifying a number of other candidate decision problems.

The rest of the paper is structured as follows. The next section gives all formal definitions required for what will come later. Section~\ref{sec:consensusrules} provides an overview of the results obtained, by identifying 45,699 candidate rules in 3 decision problems. Section~\ref{sec:proofs} is the core of the paper, where in the 3 classes identified, 40,254 of them were proven to truly be of the distributed consensus type. The last section wraps up the results and discusses the perspectives for handling the remaining 14,674 candidate rules and their potential as solvers of more sophisticated problems. An appendix concludes the text with a complete listing of all distributed consensus decision problems derived from the work, both those characterised here and the candidates ones.

\section{Basic definitions}
\label{sec:definitions}

\subsection{Cellular automata}

\begin{defn}[Cellular automaton] A \emph{cellular automaton} (CA) is a 4-tuple $(S,N,f,d)$ in which \cite{Kari2005}:

\begin{itemize}
\item $S=\{0,1,\cdots,k-1\}$, $k\in\mathbb{Z},k\geq 2$ is the $k$-ary \emph{set of states};
\item $N=\left(\vec{v}_1,\cdots,\vec{v}_{|N|}\right)$, $\vec{v}_i\in \mathbb{Z}^d,\forall 1\leq i\leq |N|$, is the \emph{neighbourhood vector};
\item $f:S^{|N|}\rightarrow S$ is the \emph{local transition function} or \emph{local rule};
\item $d\in\mathbb{Z},d\geq 1,$ is the \emph{dimension}.
\end{itemize}

When $S=\{0,1\}$, $d=1$ and $N=\{-\ceil{r},\cdots,0,\cdots\floor{r}\}$, with $r\in\{\frac{1}{2},1,\frac{3}{2},2,\frac{5}{2},\cdots\}$, the CA is said to be a \emph{radius-}$r$, binary, one-dimensional CA. In the remainder of the paper we will only refer to one-dimensional binary CAs and, for the sake of simplicity, such CAs will be referred to simply as \emph{radius}-$r$ CAs. In particular, a radius-$1$ binary one-dimensional CA is an \emph{elementary cellular automaton} (ECA).

Given a radius-$r$ local rule $f$, its \emph{(Wolfram) rule number} \cite{Wolfram2002} is given by

\begin{equation*}
W(f)=\sum_{x=(x_1,\cdots,x_{2r+1})\in\{0,1\}^{2r+1}}f(x)\cdot 2^{\left( x_1\cdot 2^{2r}+x_2\cdot 2^{2r-1}+\cdots+x_{2r+1}\cdot 2^0\right)}
\end{equation*}

For instance, the local ECA rule given by $f(x_1,x_2,x_3)=(x_1+x_2+x_3)\mbox{ mod }2$ has Wolfram number 150.

\end{defn}

\begin{defn}[Configurations]

A \emph{configuration} is a function $c:\mathbb{Z}\rightarrow \{0,1\}$. A \emph{cell} of a configuration is an integer $i$ and $c(i)$ is the \emph{state of cell} $i$. From now on, $c(i)$ will be denoted by $c_i$. The set of all configurations is denoted by $\mathscr{C}$.
\end{defn}

Given a radius-$r$ CA with local rule $f$, such a local rule induces a \emph{global transition function} (or \emph{global rule}) $F:\mathscr{C}\rightarrow \mathscr{C}$ in the set of all configurations, as follows:

\begin{equation*}
\left(F(c)\right)_i=f(c_{i-\ceil{r}},\cdots,c_i,\cdots,c_{i+\floor{r}})
\end{equation*}

In order to ease the notation, $[i,j]$, with $i<j\in\mathbb{Z}$ will denote the set of integers from $i$ to $j$. Then we can write $\left(F(c)\right)_i=f(c_{[i-\ceil{r},i+\floor{r}]})$.

The \emph{orbit of} $c$ under $f$ is the sequence $\left(c,F(c),F^2(c),F^3(c),\cdots\right)$, in which $F^n,n\geq 1$, denotes the iteration of $F$ $n$ times over $c$.

A \emph{configuration of length } $L$ (or a \emph{length} $L$ \emph{configuration}), $L\in\mathbb{Z},L\geq 1$, is a configuration such that $c_i=c_{i+L}$ for all $i\in\mathbb{Z}$. Such configurations can be regarded as vectors $c=(c_0,\cdots,c_{L-1})\in\{0,1\}^L$ and the set of all such vectors is denoted by $\mathscr{C}_L$.

The local function $f$ induces a global function $F_L:\mathscr{C}_L\rightarrow\mathscr{C}_L$ in the set of all configurations of length $L$ as follows:

\begin{equation*}
\left(F_L(c)\right)_i=f(c_{(i-\ceil{r})\mbox{ mod }L},\cdots,c_i,\cdots,c_{(i+\floor{r})\mbox{ mod }L})
\end{equation*}

From now on, we will only address configurations of length $L$, therefore, for the sake of simplicity, $F_L$ will be denoted simply by $F$ and the indices of the cells are always assumed to be taken modulo $L$.

A \emph{block of length} $l\geq 1$ is a vector $b=(b_1,\cdots,b_{l})\in\{0,1\}^l$. Such block can also be denoted by $b=b_1\cdots b_{l}$.

Given two blocks $b=b_1\cdots b_l$ and $\hat{b}=\hat{b}_1\cdots\hat{b}_{\hat{l}}$, with $l\leq \hat{l}$, we say that $b$ is a \emph{sub-block of} $\hat{b}$ if there is $1\leq j\leq \hat{l}$ such that $b=\hat{b}_{[j,j+l-1]}$.

Given a configuration $c$ of length $L\geq 1$, we say that a block $b=b_1\cdots b_l$ of length $l\leq L$ is a \emph{sub-block of }$c$ if there is $0\leq j\leq L-1$ such that $b=c_{[j,j+l-1]}$.

Given a block $b=b_1\cdots b_l$, $b^n$ will denote the concatenation of block $b$ $n$ times, that is

\begin{equation*}
b^n=\underbrace{b_1\cdots b_l}_{1}\underbrace{b_1\cdots b_l}_{2}\cdots\underbrace{b_1\cdots b_l}_{n}
\end{equation*}

\begin{defn}[Extension of $f$ to blocks]

We can extend the application of a local function of radius-$r$ to a block of length $l\geq 2r+1$ as follows. Let $b=b_0\cdots b_{l-1}$ a block, and consider the function $\hat{f_l}:\{0,1\}^l\rightarrow\{0,1\}^{l-2r}$ defined by:

\begin{equation*}
\hat{f_l}(b_0\cdots b_{l-1})=f(b_{[0,2r]})f(b_{[1,2r+1]})\cdots f(b_{[l-2r-1,l-1]})
\end{equation*}
\end{defn}

The function $\hat{f_l}$ is the \emph{extension of} $f$ \emph{to blocks of length} $l$. When $l$ is clear from the context, we will drop it from the notation and write simply $\hat{f}$.

\begin{example}[Image of a block under an extended local function]
Let $f:\{0,1\}^3\rightarrow \{0,1\}$ be a radius-$1$ CA local rule given by $f(x,y,z)=(x+y+z)\text{ mod }2$. The image of block $110100$ under $\hat{f}_6\dot{\equiv} \hat{f}$ is given by:

\begin{equation*}
\hat{f}(110100)=f(110)f(101)f(010)f(100)=0011
\end{equation*}
\end{example}











\subsection{Attractors}

\begin{defn}[Attractors]
\label{defn:attractors}
Let $f$ be a radius-$r$ CA local rule and let $X=\left\{x^{(1)},x^{(2)},\cdots,x^{(n)}\right\}$ be a set of $n$ configurations of length $L$. The set $X$ is said to be an \emph{attractor of length} $L$ \emph{of} $F$ when 

\begin{itemize}

\item $F(X)=X$, with $F(X)=\left\{F\left(x^{(1)}\right),\cdots,F\left(x^{(n)}\right)\right\}$;

\item If $Y\subseteq X$ and $F(Y)=Y$, then $Y=X$.
\end{itemize}

If $n=1$, such attractor is named a \emph{fixed-point} and it will be denoted simply as $x^{(1)}$ instead of $\{x^{(1)}\}$ and we will say that $x^{(1)}$ is an attractor.

If $n>1$, the attractor is named a \emph{limit-cycle} and, for any $x^{(i)}\in X$, we have $F^n\left(x^{(i)}\right)=x^{(i)}$. In that case, $n$ is a \emph{period} of $X$, as well as any positive multiple of $n$ is also a period of $X$.

Let $X$ be an attractor of length $L$ and let $B_X=\left\{c\in\mathscr{C}_L:F^t(c)\in X \mbox{ for some }t\geq 0\right\}$. That is, $B_X$ is the set of configurations of length $L$ such that, after being iterated under $F$ a certain number of times, the resulting image end up belonging to $X$. The set $B_X$ is the \emph{basin of attraction of} $X$.
\end{defn}

\begin{defn}[Consensus rules]
\label{defn:classificationrule}

We say that a (binary, one-dimensional) radius-$r$ CA rule is a \emph{consensus rule} (or a \emph{decision rule}) if, for any $L\geq 2r+1$, its only attractors are $0^L$ and $1^L$. In this case, for each $L$, the basin of attraction of $0^L$ is named \emph{Class 0} and the basin of attraction of $1^L$ is named \emph{Class 1}.

In other words, consensus rules partition each $\mathscr{C}_L$ into two classes: the configurations converging to $0^L$ and the configurations converging to $1^L$. For simplicity, the attraction basin of $0^L$ (respectively, the basin of attraction of $1^L$) for a rule $f$ will be denoted by $\mathscr{B}_{(f,0)}^L$ (respectively, by $\mathscr{B}_{(f,1)}^L$).

Also, $\mathscr{B}_{(f,x)}=\bigcup_{L\geq 2r+1}\left(\mathscr{B}_{(f,x)}^L\right),x\in\{0,1\}$, will denote the union of all basins of attraction of $x^L$ for different values of $L\geq 2r+1$.

If $f$ is clear from the context, we will write simply $\mathscr{B}_x$ and $\mathscr{B}^L_x$, respectively, instead of $\mathscr{B}_{(f,x)}$ and $\mathscr{B}^L_{(f,x)}$.

\end{defn}

For example, a solution to the parity consensus would have

\begin{itemize}
\item $\mathscr{B}_0^L=\{ c \in \mathscr{C}_L : c \text{ has an even number of cells in state } 1 \}$ and
\item $\mathscr{B}_1^L=\{ c \in \mathscr{C}_L : c \text{ has an odd number of cells in state } 1 \}$,
\end{itemize}
\noindent and a solution to the majority consensus would have
$\mathscr{B}_x^L=\{ c \in \mathscr{C}_L : c \text{ has a majority of cells in state } x \}$ for $x\in\{0,1\}$.

In the next section we discuss how the rules studied in this paper were obtained and how they are classified in terms of the decision problem they solve for the synchronous update.

\section{Radius-2 consensus rules}
\label{sec:consensusrules}

As explained in Section~\ref{sec:introduction}, here we revisited the data and analysis available in the literature for binary, radius-$2$, one-dimensional cellular automata, with synchronous updates, over cyclic configurations.

As such, recall that we started from scratch, with all $2^{2^5}$ rules of the space, partitioned the set into its classes of dynamical equivalence (due to the symmetries derived from negation, reflection, or the composition of both) and considered the representative rule of each class as the one with the smallest Wolfram number. Then we filtered the resulting rule set towards identifying likelier consensus type rules, by computational checking the basins of attraction of all the rules for all configuration sizes $L$ from $L=5$ up to $L=20$, so that they would only feature as attractors the fixed points $0^L$ and $1^L$. This led to 54,928 candidate rules. 

Out of the latter, 45,699 of them were identified as candidates for 3 well-defined decision problems, as listed right below, organised into 3 main rule classes (A, B and C), comprising 83.2\% of the original set of candidate rules. 

It should be noticed that the remaining set of 9,229 uncharacterised rules comprise three possibilities: rules that may still come to be associated to the decision problems of the 3 classes, but were left out because the characterisation we provided (to be presented in the next section) may have been incomplete; rules associated to additional decision problems, yet to be characterised (see the extensive discussion and examples in Subsection~\ref{subsec:discussion} and in the Appendix); or rules that may not be candidate decision problems at all, but were not filtered out since they were subjected only to configuration sizes up to $L=20$. 

Finally, out of the 45,699 candidates in the 3 classes, we provided precise formal explanation about the dynamics of 40,254 of them, or 88.1\%, meaning that the remaining set of 5,445 rules should be added to the previous 9,229 candidate rules, thus totalling 14,674 rules, or 26.7\% of all rules obtained out of the filtering process.

Summing up, out of the 54,928 radius-$2$ consensus rules obtained for $5 \leq L \leq 20$, at least 40,254 rules (representing, approximately, $73.3 \%$ of the total amount of potential consensus rules) really do solve decision problems for configurations of any length $L \geq 5$. So, as detailed below, our results on the radius 2 space significantly extended what was known about the space of possible decision problems in the space, by settling the matter on many more rules and at the same time identifying a number of other candidate decision problems.

\begin{itemize}
\item \textbf{Class A} (30,230 rules): $\mathscr{B}^L_{(f,1)}=\{1^L\}$
\item \textbf{Class B} (14,680 rules):
    $
        \mathscr{B}^L_{(f,1)}  =  
        \begin{cases}
        \{1^L\},\mbox{ if }L\mbox{ is odd;}\\
        \{1^L,(01)^{L/2},(10)^{L/2}\},\mbox{ if }L\mbox{ is even}
        \end{cases}
    $
\item \textbf{Class C} (789 rules): 
    $
           \mathscr{B}^L_{(f,1)}  =  
           \begin{cases}
        \{1^L\},\mbox{ if }L\mbox{ is odd;}\\
        \{1^L,(01)^{L/2},(10)^{L/2}\},\mbox{ if }L\equiv 2 \mbox{ mod }4;\\
         \{1^L,(01)^{L/2},(10)^{L/2},(1100)^{L/4},(1001)^{L/4},(0011)^{L/4},(0110)^{L/4}\},\\\mbox{ if }L\equiv 0 \mbox{ mod }4;
        \end{cases}
    $

\end{itemize}

Notice that, although the rules in each class have the same attractors (at least up to $L=20$), the actual dynamics of the rules are not the same, i.e., their basins of attraction are different.

In the next section we show that most rules in Classes A, B and C are indeed consensus rules which have exactly the basins of attraction described above for all $L\geq 5$.

\section{Proofs}
\label{sec:proofs}

Each subsection below addresses one of the rule classes and is divided in two parts: first, we provide the proofs regarding the basins of attraction for rules in each class; then, we describe how the properties of the rules in each class were checked.

\subsection{Class A}
\label{subsec:0forcing}

\subsubsection{Proofs for Class A}

Let $f$ be a local rule in Class A. Then either $f$ is a 0-forcing rule or there is an integer $m$ such that $f^m$ is a 0-forcing rule. The results below show that in either case, there are only two attractors, namely $0^L$ and  $1^L$, and that $\mathscr{B}_1=\{1^L\}$ and $\mathscr{B}_0=\mathscr{C}_L-\{1^L\}$.

In other words, the results below show that rules in Class A simply detect whether or not there is at least one cell in state 0 in a configuration of length $L$.

\begin{prop}
\label{prop:x0x1}
Let $f$ be a binary radius-$r$ CA local rule such that $f\left (x_{\left[-\ceil{r},\floor{r}\right]}\right)=x_0\cdot x_i\cdot \phi\left( x_{\left[-\ceil{r},\floor{r}\right]}\right)$, with $i\in\{-1,1\}$, for some function $\phi: \{0,1\}^{2r+1}\rightarrow \{0,1\}$ such that $\phi(1^{2r+1})=1$. Then the only attractors of length $L\geq 2r+1$ of the induced global rule $F$ are $0^L$ and $1^L$ and they are fixed-points. Moreover, if a configuration $c\in \mathscr{C}_L$ has any of its cells in state 0, then $c\in\mathscr{B}^L_0$.
\end{prop}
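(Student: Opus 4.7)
The plan is to exploit the specific algebraic form of $f$: since $f(x_{[-\lceil r\rceil,\lfloor r\rfloor]}) = x_0 \cdot x_i \cdot \phi(x_{[-\lceil r\rceil,\lfloor r\rfloor]})$, the central cell being $0$ (or the $i$-th neighbour being $0$) forces the output to $0$. This gives two key structural facts about how the state $0$ behaves under $F$: it is \emph{preserved} at its own position and it \emph{propagates} in the direction opposite to $i$ at each time step. Combined with the fact that $\phi(1^{2r+1}) = 1$ makes $1^L$ a fixed point, these facts will immediately imply everything in the statement.

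The steps I would carry out in order are as follows. First, I would verify that $0^L$ and $1^L$ are fixed points: for $0^L$, note $f(0^{2r+1}) = 0 \cdot 0 \cdot \phi(0^{2r+1}) = 0$; for $1^L$, note $f(1^{2r+1}) = 1 \cdot 1 \cdot \phi(1^{2r+1}) = 1$ by the hypothesis on $\phi$. Second, I would establish \emph{$0$-preservation}: for any $c \in \mathscr{C}_L$ and any cell $j$ with $c_j = 0$, we have $(F(c))_j = c_j \cdot c_{j+i} \cdot \phi(c_{[j-\lceil r\rceil,j+\lfloor r\rfloor]}) = 0$. Third, I would establish \emph{$0$-propagation} one step to the left (if $i=1$) or right (if $i=-1$): if $c_j = 0$, then for the cell at position $j-i$ we have $(F(c))_{j-i} = c_{j-i} \cdot c_{(j-i)+i} \cdot \phi(\cdots) = c_{j-i} \cdot 0 \cdot \phi(\cdots) = 0$.

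Fourth, I would combine these by induction on $t$ to show that if $c_{j_0} = 0$ for some $j_0$, then $(F^t(c))_{j_0 - ti \bmod L} = 0$ and, more strongly, all cells at positions $j_0 - si \bmod L$ for $0 \leq s \leq t$ are $0$ in $F^t(c)$. Since the cyclic shift by $-i$ has order $L$, after $t = L-1$ iterations every cell is $0$, i.e. $F^{L-1}(c) = 0^L$. This proves the last sentence of the statement: any configuration with at least one $0$ belongs to $\mathscr{B}_0^L$. Fifth, for the attractor characterisation, let $X$ be any attractor of length $L$; any $c \in X$ must eventually return to itself under $F$, so $c$ cannot reach $0^L$ in finitely many steps unless $c = 0^L$. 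By the previous step this forces either $c = 0^L$ or $c$ has no cell in state $0$, i.e. $c = 1^L$. Hence $X \subseteq \{0^L, 1^L\}$, and since both are fixed points, the only attractors are the singletons $\{0^L\}$ and $\{1^L\}$.

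No step here looks genuinely hard — the argument is essentially a one-directional avalanche of $0$s on a cycle of length $L$. The only place that demands a little care is the bookkeeping of indices modulo $L$ in the propagation induction (step four), in order to verify that after $L-1$ steps the set of forced $0$-positions covers all of $\{0, 1, \ldots, L-1\}$ regardless of the sign of $i$; this is the sole mild obstacle, and it is routine.
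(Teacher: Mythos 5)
Your proposal is correct and follows essentially the same route as the paper: verify that $0^L$ and $1^L$ are fixed points, then use $0$-preservation at the centre together with one-directional propagation (towards $-i$) in an induction that grows a block of $0$s by one cell per step, so that $F^{L-1}(c)=0^L$ whenever $c$ contains a $0$, which forces every attractor to be one of the two fixed points. The only cosmetic difference is that you make the recurrence argument for the attractor characterisation explicit, whereas the paper leaves it implicit after establishing convergence.
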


\begin{proof}
First, notice that $f(1^{2r+1})=1\cdot 1\cdot \phi(1^{2r+1})=1$ and $f(0^{2r+1})=0\cdot 0\cdot \phi(0^{2r+1})=0$. Therefore $F(0^L)=0^L$ and $F(1^L)=1^L$, for all $L\geq 2r+1$.

Let $c$ be a configuration of length $L\geq 2r+1$ such that there is $0\leq j\leq L-1$ with $c_j=0$.

For $i=-1$, we will show that $\left(F^k(c)\right)_{[j,j+k]}=0^{k+1}$ for all $k\geq 0$. For $k=0$, it is trivial, since $c_j=0$ by hypothesis. Suppose $\left(F^l(c)\right)_{[j,j+l]}=0^{l+1}$ for some $l\geq 0$. Then, we have:

\begin{eqnarray*}
\left(F^{l+1}(c)\right)_{j+m}& = &F\left(F^l(c)\right)_{j+m}=f(F^l(c)_{j+m-\ceil{r}},\cdots,F^l(c)_{j+m-1},0,F^l(c)_{j+m+1},\cdots,F^l(c)_{j+m+\floor{r}})=\\
& = & 0\cdot F^l(c)_{j+m-1}\cdot \phi(F^l(c)_{[j+m-\ceil{r},j+m+\floor{r}]})=0 \mbox{ for all } 0\leq m\leq l, \mbox{ and }\\
\left(F^{l+1}(c)\right)_{j+l+1}& = &F\left(F^l(c)\right)_{j+l+1}=f(F^l(c)_{j+l+1-\ceil{r}},\cdots,F^l(c)_{j+l-1},0,F^l(c)_{j+l+1},\cdots,F^l(c)_{j+l+1+\floor{r}})=\\
& = & F^l(c)_{j+l+1}\cdot 0 \cdot \phi(F^l(c)_{[j+l+1-\ceil{r},j+l+1+\floor{r}]})=0
\end{eqnarray*}

By induction, it follows that $\left(F^k(c)\right)_{[j,j+k]}=0$ for all $k\geq 0$. In particular for $k=L-1$, this implies $F^{L-1}(c)=0^L$, which is a fixed-point. Therefore, $c\in\mathscr{B}^L_0$.

In other words, $f$ decides whether or not a configuration has at least one cell in state 0: if it does, it converges to $0^L$ under $f$; if it does not, then it remains fixed in $1^L$.

The proof for $i=1$ is analogous, sufficing to assume $\left(F^l(c)\right)_{[j-l,j]}=0^{l+1}$ for some $l\geq 0$ in the induction step and show that $\left(F^k(c)\right)_{[j-k,j]}=0^{k+1}$ for all $k\geq 0$.

\end{proof}

\begin{corol}
\label{corol:shifted}

Let $f$ be a binary radius-$r$ one-dimensional CA rule such that $f(x_{[-\ceil{r},\floor{r}]})=x_{i}\cdot x_{i+1}\cdot \phi(x_{[-\ceil{r},\floor{r}]})$ for some function $\phi:\{0,1\}^{2r+1}\longrightarrow \{0,1\}$ such that $\phi(1^{2r+1})=1$ and some $-\ceil{r}\leq i\leq \floor{r}$. Then the only attractors of length $L$ of $F$ are $0^L$ and $1^L$. Moreover, if $c\in\mathscr{C}_L$ has any of its cells in state 0, then $c\in\mathscr{B}^L_0$.

\end{corol}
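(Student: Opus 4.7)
The plan is to reduce Corollary~\ref{corol:shifted} to Proposition~\ref{prop:x0x1} via a shift of the neighbourhood index. The key observation is that the factor $x_i \cdot x_{i+1}$ is, up to a relabelling of indices, the same as the factor $x_0 \cdot x_1$ required by the proposition, and since the global rule of any cellular automaton commutes with the cyclic shift, this relabelling translates the dynamics on $\mathscr{C}_L$ only by a fixed shift, which is harmless for the shift-invariant configurations $0^L$ and $1^L$.

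Concretely, I would define an auxiliary local rule $g$ of some enlarged radius $r' \ge r$ by setting $g(y_{[-\ceil{r'},\floor{r'}]}) := f(y_{[-\ceil{r}-i,\floor{r}-i]})$, so that in the variables $y$ the factor becomes $y_0 \cdot y_1$ and $g$ fits the hypothesis of Proposition~\ref{prop:x0x1} (with a multiplier $\phi'$ inheriting $\phi'(1^{2r'+1})=1$ from $\phi(1^{2r+1})=1$). Writing $G$ for the global rule of $g$ and $\sigma$ for the unit cyclic shift on $\mathscr{C}_L$, one then verifies that $G = \sigma^{-i} \circ F$. Using $F \circ \sigma = \sigma \circ F$, this yields $G^k = \sigma^{-ki} \circ F^k$; since $\sigma$ fixes $0^L$ and $1^L$, the equivalence $G^k(c) = 0^L \iff F^k(c) = 0^L$ follows, and Proposition~\ref{prop:x0x1} applied to $g$ transfers the basin description and the only-attractors-are-$0^L,1^L$ conclusion back to $F$.

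The main obstacle is a length-range mismatch: Proposition~\ref{prop:x0x1} guarantees its conclusion only for $L \ge 2r'+1$, while Corollary~\ref{corol:shifted} should hold for all $L \ge 2r+1$. To plug the gap in the range $2r+1 \le L < 2r'+1$, I would instead (or in addition) argue directly in the style of the proposition's proof: since $F(c)_p = c_{p+i} \cdot c_{p+i+1} \cdot \phi(\ldots)$, a single zero at position $j$ in $c$ forces zeros at the two consecutive positions $j-i-1$ and $j-i$ in $F(c)$, and an easy induction on $k$ yields $(F^k(c))_{[j-k(i+1),\, j-ki]} = 0^{k+1}$; taking $k = L-1$ fills the entire cyclic configuration, so $F^{L-1}(c) = 0^L$. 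This direct induction is uniform in $L \ge 2r+1$ and in fact supersedes the shift-reduction, but the reduction clarifies conceptually why the extension from Proposition~\ref{prop:x0x1} to arbitrary $i$ is natural.
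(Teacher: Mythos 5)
Your proposal is correct, and it is half the paper's route plus a genuinely different, self-contained alternative. The first half --- defining the enlarged-radius rule $g$ by $g(y)=f(y_{[-\ceil{r}-i,\floor{r}-i]})$, observing $G=\sigma_{-i}\circ F$, using that $F$ commutes with the shift, and pulling the conclusions of Proposition~\ref{prop:x0x1} back to $F$ --- is exactly what the paper does; the paper merely phrases the transfer by contradiction (a fixed point or period-$p$ cycle of $F$ other than $0^L,1^L$ would give $G^L(x)=x$ or $G^{Lp}(x)=x$, and a configuration with a $0$ reaching $1^L$ under $F$ would reach $1^L$ under $G$), whereas your formulation via $G^k=\sigma_{-ki}\circ F^k$ and shift-invariance of $0^L,1^L$ is a slightly cleaner version of the same step. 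Where you genuinely depart is the length-range observation and the direct induction. You are right that a literal application of Proposition~\ref{prop:x0x1} to the radius-$(r+|i|)$ rule $g$ only covers $L\geq 2(r+|i|)+1$, while the corollary is meant for $L\geq 2r+1$; the paper's proof passes over this silently (its proof of Proposition~\ref{prop:x0x1} never actually uses $L\geq 2r+1$, but as stated the citation does not cover the small lengths). Your direct argument closes this and in fact stands alone: from $F(c)_p=c_{p+i}\cdot c_{p+i+1}\cdot\phi(\cdots)$, a zero at $j$ forces zeros at $j-i-1$ and $j-i$, and your induction $(F^k(c))_{[j-k(i+1),\,j-ki]}=0^{k+1}$ checks out (each zero at position $q$ yields zeros at $q-i-1$ and $q-i$, so the block grows by one cell per step; at $k=L-1$ it covers all residues modulo $L$, giving $F^{L-1}(c)=0^L$), uniformly in $L\geq 2r+1$. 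This is precisely the proof of Proposition~\ref{prop:x0x1} generalised from the factor $x_0x_{\pm1}$ to $x_ix_{i+1}$, so it buys a shift-free, uniform proof of Corollary~\ref{corol:shifted}. The one point to make explicit in that direct route is that $1^L$ is a fixed point, via $f(1^{2r+1})=1\cdot 1\cdot\phi(1^{2r+1})=1$; without it the claim that $1^L$ is an attractor (and hence the exact statement of the corollary) would not follow, though it is immediate from the hypothesis $\phi(1^{2r+1})=1$.
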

Let $L\geq 2r+1$.

First, notice that $F(0^L)=f(0^{2r+1})^L=\left[0\cdot 0\cdot \phi(0^{2r+1})\right]^L=0^L$ and $F(1^L)=f(1^{2r+1})^L=\left[1\cdot 1\cdot \phi(1^{2r+1})\right]^L=1^L$. That is, $0^L$ and $1^L$ are fixed-points of length $L$ of $F$. In particular, $\{1^L\}\subseteq \mathscr{B}^L_1$.

The case $i=0$ is already proven in Proposition \ref{prop:x0x1}.

Let $g$ be a radius-$(r+i)$ CA local rule given by
$g(x_{[-\ceil{r}-|i|,\floor{r}+|i|]})=f(x_{[-\ceil{r}-|i|,\floor{r}-|i|]})$.

Define $\tilde{\phi}:\{0,1\}^{2(r+|i|)+1}\longrightarrow \{0,1\}$ by $\tilde{\phi}(x_{[-\ceil{r}-|i|,\floor{r}+|i|]})=\phi(x_{[-\ceil{r}-i,\floor{r}-i]})$. That is, $g$ is the $(-i)$-shift of $f$ and $G=\sigma_{-i}\circ F$.

Notice that:
\begin{equation*}
g(x_{[-\ceil{r}-|i|,\floor{r}+|i|]})=f(x_{[-\ceil{r}-i,\floor{r}-i]})=
x_0\cdot x_1\cdot \phi(x_{[-\ceil{r}-i,\floor{r}-i]})= x_0\cdot x_1\cdot\tilde{\phi}(x_{-\ceil{r}-|i|,\floor{r}+|i|}).
\end{equation*}

Also, since $\tilde{\phi}(1^{2(r+|i|)+1})=\phi(1^{2r+1})=1$, $g$ satisfies the conditions in Proposition \ref{prop:x0x1}, therefore $0^L$ and $1^L$ are the only attractors of length $L$ of $G$.

Suppose $F$ has a fixed-point $x$ of length $L$ such that $x\notin\{0^L,1^L\}$. Then we would have:
\begin{equation*}
G^L(x)=\left(\sigma_{(-i)}\circ F\right)^L(x) = \left(\sigma_{(-i)}\right)^L\left( F^L(x)\right)=\left(\sigma_{-L\cdot i}\right)(x)=\sigma_0(x)=x.
\end{equation*}

That is, $x$ would belong to a limit-cycle of period $L$ of $G$, contradicting the fact that $0^L$ and $1^L$ are the only attractors of $G$.

On the other hand, if $F$ has a configuration $x$ of length $L$, $x\notin \{0^L,1^L\}$, belonging to a limit-cycle of period $p>1$, then we would have $F^p(x)=x$ and:
\begin{equation*}
G^{L\cdot p}(x)=\left(\sigma_{(-i)}\circ F\right)^{L\cdot p}(x) = \left(\sigma_{(-i)}\right)^{L\cdot p}\left( F^{L\cdot p}(x)\right)=\left(\sigma_{-L\cdot p\cdot i}\right)(x)=\sigma_0(x)=x.
\end{equation*}

In that case, $x$ would belong to a limit-cycle of period $L\cdot p$ of $G$, contradicting the fact that $0^L$ and $1^L$ are the only attractors of 
$G$.

Therefore, 
the only attractors of length $L$ of $F$ are $0^L$ and $1^L$.

Moreover, assume $c\in\mathscr{C}_L$ is such that $c_j=0$ for some $0\leq j\leq L-1$ and suppose $F^k(c)=1^L$ for some integer $k>0$. Then we would have:
\begin{equation*}
G^k(c)=(\sigma_{(-i)})^k\left((F^k)(c)\right)=\sigma_{(-k\cdot i)}(1^L)=1^L.
\end{equation*}

In other words, $c$ would belong to the basin of attraction of $1^L$ for $G$, which is a contradiction, since $g$ satisfies the conditions in Proposition \ref{prop:x0x1}. Therefore, $c$ is in the basin of attraction of $0^L$ for $F$.



\begin{prop}
\label{prop:x0x2spread}
Let $f$ be a binary radius-$r$ CA local rule such that $f\left (x_{\left[-\ceil{r},\floor{r}\right]}\right)=x_0\cdot x_i\cdot \phi\left( x_{\left[-\ceil{r},\floor{r}\right]}\right)$, with $i\in\{-2,2\}$, for some function $\phi: \{0,1\}^{2r+1}\rightarrow \{0,1\}$. Let $c$ be a configuration of length $L\geq 2r+1$ such that $c_j=0$ for some $0\leq j\leq L-1$.

Then $\left(F^k(c)\right)_{j+2l}=0$ for all $0\leq l\leq k$ and all $k\geq 0$.
\end{prop}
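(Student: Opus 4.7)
The plan is to adapt the induction from Proposition~\ref{prop:x0x1} to the setting where the second factor of the local rule reaches two cells away rather than one, so that zeros propagate in steps of size $2$ rather than $1$. Unlike in Proposition~\ref{prop:x0x1}, no hypothesis on $\phi(1^{2r+1})$ is needed: the statement makes no claim about $1^L$ being preserved, only about how 0s spread, which is governed entirely by the absorbing product $x_0\cdot x_i$ in the local rule.

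I would induct on $k$, focusing first on the case $i=-2$. The base case $k=0$ is immediate, since $(F^0(c))_{j}=c_j=0$. For the inductive step, assume $(F^k(c))_{j+2l}=0$ for all $0\leq l\leq k$, and fix $0\leq l\leq k+1$. Unfolding one application of the local rule at position $j+2l$ gives
\begin{equation*}
(F^{k+1}(c))_{j+2l} = (F^k(c))_{j+2l}\cdot (F^k(c))_{j+2(l-1)}\cdot \phi\left((F^k(c))_{[j+2l-\ceil{r},\,j+2l+\floor{r}]}\right).
\end{equation*}
When $0\leq l\leq k$, the first factor vanishes by the inductive hypothesis; when $l=k+1$ we have $l-1=k$, so the middle factor $(F^k(c))_{j+2k}$ vanishes by the inductive hypothesis. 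In either case the product is $0$, closing the induction.

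The case $i=2$ runs symmetrically: the product instead reads $(F^k(c))_{j+2l}\cdot (F^k(c))_{j+2(l+1)}\cdot \phi(\cdots)$, so zeros spread in the opposite direction and the analogous induction yields the corresponding conclusion, namely $(F^k(c))_{j-2l}=0$ for $0\leq l\leq k$. I do not expect any real difficulty here; the entire argument is index bookkeeping, and the only subtle point is handling the new boundary index $l=k+1$ via the middle factor rather than the first factor of the product, which is already the pattern used in Proposition~\ref{prop:x0x1}.
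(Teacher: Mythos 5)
Your proof is correct and essentially the paper's argument: your case $0\le l\le k$ (first factor vanishes) is exactly the paper's preliminary observation that cells in state $0$ stay in state $0$, and your case $l=k+1$ (middle factor vanishes) is the paper's frontier induction showing $\left(F^k(c)\right)_{j+2k}=0$, merely packaged as a single strengthened induction instead of two separate steps. Your remark that for $i=2$ the zeros spread toward $j-2l$ is accurate and in fact more precise than the statement's literal indexing (which the paper's proof dismisses with ``analogous''); this mirrored form is all the subsequent corollaries need, since they only use the parity of the affected positions.
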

\begin{proof}
First notice that, for any configuration $u$, if $u_p=0$ for some $0\leq p\leq L-1$, then $F(u)_p=0$, since $F(u)_p=f(u_{p-\ceil{r}},\cdots,u_{p-1},0,u_{p+1},\cdots,u_{p+\floor{r}})=0\cdot u_{p+\KP{i}}\cdot \phi\left(u_{[p-\ceil{r},p+\floor{r}]}\right)=0$. In other words, cells in state $0$ remain in state $0$.

Assume $i=-2$. Since the $0$s are fixed under $F$, it suffices to show that $\left(F^k(c)\right)_{j+2k}=0$ for all $k\geq 0$. For $k=0$ it is trivial, since $c_j=0$ by hypothesis. Assume $\left(F^\lambda(c)\right)_{j+2\lambda}=0$ for some $\lambda\geq 0$ and let $y=F^\lambda (c)$. Then:

\begin{eqnarray*}
\left(F^{\lambda+1}(c)\right)_{j+2(\lambda+1)} & = & \left(F(y)\right)_{j+2(\lambda +1)}=f\left(y_{j+2(\lambda+1)-\ceil{r}},\cdots,y_{j+2\lambda-1},0,y_{j+2\lambda +1},\cdots,f_{j+2(\lambda+1)+\floor{r}}\right) = \\
& = & y_{j+2(\lambda+1)} \cdot 0 \cdot \phi(y_{[j+2(\lambda+1)-\ceil{r},j+2(\lambda+1)+\floor{r}]})=0
\end{eqnarray*}

By induction over $\lambda$ and the fact that all 0s remain fixed throughout the evolution of $f$, it follows that $\left(F^k(c)\right)_{j+2l}=0$ for all $0\leq l\leq k$ and all $k\geq 0$.

The proof for $i=2$ is analogous.
\end{proof}

\begin{corol}
\label{corol:x0x2spread}
Let $f$ be a binary radius-$r$ CA local rule such that $f\left (x_{\left[-\ceil{r},\floor{r}\right]}\right)=x_0\cdot x_i\cdot \phi\left( x_{\left[-\ceil{r},\floor{r}\right]}\right)$, with $i\in\{-2,2\}$, for some function $\phi: \{0,1\}^{2r+1}\rightarrow \{0,1\}$. Let $c$ be a configuration of length $L\geq 2r+1$ such that $c_j=0$ for some $0\leq j\leq L-1$. Then:
\begin{enumerate}
\item If $L$ is even, then $\left(F^{\frac{L-2}{2}}(c)\right)_{l}=0$, for all $l\equiv j\mbox{ mod }2$;
\item If $L$ is odd, then $F^{\frac{L-1}{2}}(c)$ has, at least, two adjacent cells in state 0.
\end{enumerate}
\end{corol}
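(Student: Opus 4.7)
The plan is to apply Proposition \ref{prop:x0x2spread} directly, choosing the value of $k$ to just ``cover'' all positions of one parity class in $\mathbb{Z}_L$, and then do elementary counting modulo $L$ to deduce the two claims.

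First I would treat the case $L$ even. Setting $k = (L-2)/2$ in Proposition \ref{prop:x0x2spread} yields $(F^k(c))_{j+2l} = 0$ for $l = 0, 1, \dots, (L-2)/2$. Since $L$ is even, the $L/2$ integers $j, j+2, j+4, \dots, j+(L-2)$ are pairwise distinct modulo $L$ and are exactly the residues in $\mathbb{Z}_L$ congruent to $j$ modulo $2$ (parity being preserved modulo an even number). Hence every cell $l \equiv j \pmod 2$ is in state $0$ at time $(L-2)/2$, which is claim (1).

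For the case $L$ odd, I would set $k = (L-1)/2$. The proposition then gives $(F^k(c))_{j+2l} = 0$ for $l = 0, 1, \dots, (L-1)/2$. Because $\gcd(2,L)=1$ when $L$ is odd, the values $j+2l \bmod L$ are pairwise distinct, yielding $(L+1)/2$ cells in state $0$. I would then single out the two extreme values of $l$: $l=0$ gives position $j$, and $l=(L-1)/2$ gives position $j + L - 1 \equiv j - 1 \pmod L$. These two positions are adjacent on the cyclic configuration and both equal $0$, proving claim (2). (Alternatively, since any independent set in the cycle $C_L$ with $L$ odd has at most $(L-1)/2$ vertices, having $(L+1)/2$ zeros forces two of them to be adjacent.)

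There is no real obstacle here: the result is essentially a modular-arithmetic corollary of the previous proposition. The only point requiring care is making sure the chosen $k$ is the smallest that works in each parity case and verifying, by explicit enumeration of the residues $2l \bmod L$, that the two claimed conclusions hold; in particular, for the odd case, noting that $l=(L-1)/2$ wraps around to position $j-1$ is what delivers adjacency without appealing to a pigeonhole argument.
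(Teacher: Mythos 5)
Your proof is correct and follows essentially the same route as the paper's: apply Proposition \ref{prop:x0x2spread} with $k=\frac{L-2}{2}$ (even $L$) so that the positions $j, j+2,\dots, j+L-2$ exhaust one parity class, and with $k=\frac{L-1}{2}$ (odd $L$) so that the extreme positions $j$ and $j+L-1\equiv j-1 \pmod L$ give two adjacent zeroes. The only cosmetic difference is that the paper splits into the cases $i=-2$ and $i=2$ (declaring the second analogous), whereas you invoke the proposition uniformly, which is equally valid since its conclusion does not depend on the sign of $i$.
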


\begin{proof}
Suppose $i=-2$.

1.
Assume $L=2k$ for some integer $k\geq r+1$. By Proposition \ref{prop:x0x2spread}, $\left( F^{\frac{L-2}{2}}(c)\right)_{j+2l}=0$ for all $0\leq l\leq \frac{L-2}{2}=k-1$. In other words, cells $\{j,j+2,j+4,\cdots,j+2k-2\}$ are in state $0$ after $\frac{L-2}{2}$ iterations of $F$ over $c$. Notice that such cells comprise every cell in $c$ which position have the same parity as $j$.

2.
Assume $L=2k+1$ for some integer $k\geq r$. By Proposition \ref{prop:x0x2spread}, $\left( F^{\frac{L-1}{2}}(c)\right)_{j+2l}=0$ for all $0\leq l\leq \frac{L-1}{2}=k$. In particular, $\left(F^{\frac{L-1}{2}}(c)\right)_{j}=0$ and $\left(F^{\frac{L-1}{2}}(c)\right)_{j+2k}=0$.

Notice that cells $j$ and $j+2k=j+L-1$ are adjacent, since by taking the positions modulo $L$ we have $(j+L-1)\mbox{ mod }L = j-1$, which is the immediate cell to the left of cell $j$.

The proof is analogous for $i=2$.
\end{proof}

\begin{prop}
\label{prop:x0x2adjacent0s}
Let $f$ be a binary radius-$r$ CA local rule such that $f\left (x_{\left[-\ceil{r},\floor{r}\right]}\right)=x_0\cdot x_i\cdot \phi\left( x_{\left[-\ceil{r},\floor{r}\right]}\right)$, with $i\in\{-2,2\}$, for some function $\phi: \{0,1\}^{2r+1}\rightarrow \{0,1\}$. Let $c$ be a configuration of length $L\geq 2r+1$ such that $c_j=c_{j+1}=0$ for some $0\leq j\leq L-1$. Then $F^k(c)=0^L$ for all $k\geq \floor{\frac{L}{2}}$.
\end{prop}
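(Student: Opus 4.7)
The plan is to combine two applications of Proposition \ref{prop:x0x2spread}, one for each of the adjacent seed zeros, and to observe that the two resulting arithmetic progressions, because they start one apart, interlock into a single run of consecutive zeros that eventually wraps around the ring. As a preliminary, I recall --- already noted inside the proof of Proposition \ref{prop:x0x2spread} --- that any cell in state $0$ is frozen at $0$ under $F$, because $f$ carries the factor $x_0$; in particular $0^L$ is a fixed point of $F$, so it suffices to exhibit a single iterate $k_0 \leq \floor{L/2}$ at which the whole ring is $0$.

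First step: apply Proposition \ref{prop:x0x2spread} to $c_j = 0$ and to $c_{j+1} = 0$. For $i = -2$ this yields that after $k$ iterations, every position of the form $j + 2l$ with $0 \leq l \leq k$ and every position of the form $(j+1) + 2l$ with $0 \leq l \leq k$ is in state $0$. The union of these two progressions is the set of $2k + 2$ consecutive indices $j, j+1, j+2, \ldots, j + 2k + 1$, reduced mod $L$. The case $i = 2$ is symmetric: the two progressions propagate leftwards and produce a run of $2k + 2$ consecutive zeros ending at $j + 1$.

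Second step: once $2k + 2 \geq L$, the consecutive run wraps around and covers every index, so $F^k(c) = 0^L$. Rearranging, this holds as soon as $k \geq (L-2)/2$, and since $\floor{L/2} \geq (L-2)/2$ for every $L \geq 2r+1$ (with equality exactly when $L$ is odd), the choice $k = \floor{L/2}$ works in both parities; the fixed-point property of $0^L$ then extends the equality to all $k \geq \floor{L/2}$. The only subtlety worth flagging, rather than a genuine obstacle, is that one really does need the two seed zeros to be \emph{adjacent}, so that together they occupy both parity classes mod $2$ and their progressions interlock into a single consecutive run; two non-adjacent seed zeros would produce two disjoint progressions that, for even $L$, might together still leave cells untouched.
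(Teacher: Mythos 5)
Your proof is correct and follows essentially the same route as the paper's: apply Proposition~\ref{prop:x0x2spread} to each of the two adjacent zeros, note that the two resulting progressions jointly cover every cell of the ring by time $\floor{L/2}$, and conclude via the fixed point $0^L$ (the paper merely organises this as an even/odd case split through Corollary~\ref{corol:x0x2spread}). One harmless slip: in $\floor{L/2}\geq (L-2)/2$ the inequality is in fact strict for both parities, not an equality when $L$ is odd, but nothing in your argument depends on this.
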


\begin{proof}
First notice that $0^L$ is a fixed-point of $F$, since $f(0,\cdots,0)=0\cdot 0\cdot \phi(0,\cdots,0)=0$.

If $L$ is even, by applying Corollary \ref{corol:x0x2spread} for cells $j$ and $j+1$, $\left(F^{\frac{L-2}{2}}(c)\right)_l$ = 0 for all $l\equiv j\mbox{ mod }2$ and all $l\equiv (j+1)\mbox{ mod 2}$. Since $j$ and $j+1$ have distinct parities, $\left(F^{\frac{L-2}{2}}(c)\right)_l=0$ for all $0\leq l\leq L-1$, that is, $F^{\frac{L-2}{2}}(c)=0^L$.

Also, since $0^L$ is a fixed-point, $F^{k}(c)=0^L$ for all $k\geq \frac{L-2}{2}$ and, with more reason, for all $k\geq \floor{\frac{L}{2}}=\frac{L}{2}$.

If, on the other hand, $L=2p+1$ is odd, for some $p\geq r$, by Proposition \ref{prop:x0x2spread}, $F^{\floor{\frac{L}{2}}}(c)=F^{p}(c)$ is such that:
\begin{itemize}
    \item Cells $\{j, j+2, \cdots,j+2p\}$ are in state 0 (by applying Proposition \ref{prop:x0x2spread} for cell $j$ in $c$);
    \item Cells $\{j+1, j+3,\cdots,j+2p+1\}$ are in state 0 (by applying Proposition \ref{prop:x0x2spread} for cell $j+1$ in $c$).
\end{itemize}

Notice that $\{j, j+2, \cdots,j+2p\}\cup \{j+1, j+3,\cdots,j+2p+1\} = \{j,j+1,j+2,\cdots,j+2p,j+2p+1\}$, which corresponds to every cell in $c$. Therefore $F^{\floor{\frac{L}{2}}}(c)=0^L$ and, since $0^L$ is a fixed-point, $F^k(c)=0^L$ for all $k\geq \floor{\frac{L}{2}}$.

\end{proof}

\begin{corol}
\label{corol:x0x2}
Let $f$ be a binary radius-$r$ CA local rule such that $f\left (x_{\left[-\ceil{r},\floor{r}\right]}\right)=x_0\cdot x_i\cdot \phi\left( x_{\left[-\ceil{r},\floor{r}\right]}\right)$, with $i\in\{-2,2\}$, for some function $\phi: \{0,1\}^{2r+1}\rightarrow \{0,1\}$. 

Let 
$
u_r=
\begin{cases}
(10)^{\frac{r}{2}}1(01)^{\frac{r}{2}},\mbox{ if }r\mbox{ is even};\\
(01)^{\frac{r-1}{2}}010(10)^{\frac{r-1}{2}},\mbox{ if }r\mbox{ is odd};\\
(10)^{\frac{\ceil{r}}{2}}10(10)^{\frac{\ceil{r}}{2}-1},\mbox{ if }r\notin \mathbb{Z}\mbox{ and }\ceil{r}\mbox{is even};\\
(01)^{\frac{\ceil{r}-1}{2}}01(01)^{\frac{\ceil{r}-1}{2}},\mbox{ if }r\notin \mathbb{Z}\mbox{ and }\ceil{r}\mbox{is odd}
\end{cases}
.
$

If $\phi(u_r)=0$ and $\phi(1^{2r+1})=1$, then the only attractors of length $L\geq 2r+1$ of $F$ are $0^L$ and $1^L$. Moreover, if $c\in\mathscr{C}_L$ has any of its cells in state 0, then $c\in\mathscr{B}^L_0$. 
\end{corol}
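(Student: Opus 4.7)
The plan is to reduce this corollary to the three preceding results: Proposition~\ref{prop:x0x2spread}, Corollary~\ref{corol:x0x2spread}, and Proposition~\ref{prop:x0x2adjacent0s}. The hypothesis $\phi(1^{2r+1})=1$ makes $1^L$ a fixed point (since $f(1^{2r+1})=1\cdot 1\cdot \phi(1^{2r+1})=1$), while $0^L$ is a fixed point for free. The real content is thus to show that every configuration $c\in\mathscr{C}_L$ containing at least one cell in state $0$ converges to $0^L$; once this is done, the attractor claim follows from minimality.

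I would fix such a $c$ with $c_j=0$ and split on the parity of $L$. If $L$ is odd, Corollary~\ref{corol:x0x2spread}(2) produces two adjacent $0$-cells after $\frac{L-1}{2}$ iterations, and Proposition~\ref{prop:x0x2adjacent0s} then forces convergence to $0^L$ in finitely many additional steps. If $L$ is even, Corollary~\ref{corol:x0x2spread}(1) tells us that $c':=F^{(L-2)/2}(c)$ has all its cells at positions of the same parity as $j$ equal to $0$. Either some cell of $c'$ at a position of opposite parity is also $0$ -- in which case we have two adjacent $0$-cells and Proposition~\ref{prop:x0x2adjacent0s} finishes -- or $c'\in\{(01)^{L/2},(10)^{L/2}\}$.

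This last alternating sub-case is where $\phi(u_r)=0$ is actually used. Here I would establish $F(c')=0^L$ by direct inspection: every $0$-cell maps to $0$ because its neighborhood satisfies $x_0=0$; every $1$-cell has $x_0=1$ and also $x_i=1$ (because $i=\pm 2$ and the alternating pattern repeats at distance $2$), so its image reduces to $\phi$ evaluated at its neighborhood, which is exactly the block $u_r$ prescribed in the statement, hence $0$ by hypothesis. The main obstacle -- really the only nontrivial verification in the whole proof -- is the bookkeeping step that matches each of the four definitional cases of $u_r$ (integer vs.\ half-integer $r$, and parity of $r$ or $\ceil{r}$) with the actual radius-$r$ neighborhood seen by a $1$-cell in $(10)^{L/2}$; routine but delicate, and for the radius-$2$ rules that are the focus of this paper it collapses to the single check $u_2=10101$.

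Finally, the attractor claim follows cleanly. Any attractor $X$ of length $L$ distinct from $\{1^L\}$ contains some configuration with a $0$, which by the above lies in $\mathscr{B}^L_0$, so $0^L\in F^k(X)=X$ for some $k$; by minimality $X=\{0^L\}$. Hence the only attractors of length $L\geq 2r+1$ are $0^L$ and $1^L$, and the moreover statement is exactly the convergence proven in the case analysis.
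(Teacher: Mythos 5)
Your proposal is correct and follows essentially the same route as the paper's proof: reduce to Corollary~\ref{corol:x0x2spread} and Proposition~\ref{prop:x0x2adjacent0s}, split on the parity of $L$, and use $\phi(u_r)=0$ to kill the residual alternating configuration in the even case. Your treatment is in fact slightly more explicit than the paper's at two points — separating the $0$-cells (forced by $x_0=0$) from the $1$-cells (whose neighbourhood is $u_r$) when showing $F((10)^{L/2})=0^L$, and spelling out the minimality argument for the attractor claim, which the paper leaves implicit — but these are refinements of the same argument, not a different one.
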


\begin{proof}
Let $L=2r+1$ and let $c$ be a configuration of length $L$.

If there are no cells in $c$ in state 0, then $F(c)=F(1^L)=\left(f(1^{2r+1})\right)^L=\left(1\cdot 1\cdot\phi(1^{2r+1})\right)=1^L$, showing that $1^L$ is a fixed-point of $F$.

Also, if there are two adjacent cells in $c$ in state 0, by Proposition \ref{prop:x0x2adjacent0s}, then $F^k(c)=0^L$ for all $k\geq \floor{\frac{L}{2}}$, $0^L$ is a fixed-point and $c\in\mathscr{B}^L_0$.

Suppose there is $0\leq j\leq L-1$ such that $c_j=0$ but there are no adjacent cells in state 0 in $c$

If $L$ is odd, by Corollary \ref{corol:x0x2spread}, $F^{\frac{L-1}{2}}(c)$ has two adjacent cells in state 0. Therefore, by Proposition \ref{prop:x0x2adjacent0s}, $F^k\left(F^{\frac{L-1}{2}}(c)\right)=0^L$ for all $k\geq \floor{\frac{L}{2}}$. Hence, $0^L$ is a fixed-point of $F$ and $c\in\mathscr{B}^L_0$.

On the other hand, if $L$ is even, let $y=F^{\frac{L-2}{2}}(c)$. If $y$ has two adjacent cells in state 0, by Proposition \ref{prop:x0x2adjacent0s}, $F^k(y)=F^{k+\frac{L}{2}}(c)=0^L$ for all $k\geq \floor{\frac{L}{2}}$, $0^L$ is a fixed-point of $F$ and $c\in\mathscr{B}^L_0$.

Also, if $y$ does not have two adjacent cells in state 0, then $y$ is composed by alternating 0s and 1s, that is, $y=(10)^{\frac{L}{2}}$. In this case, since $\phi(u_r)=0$ by hypothesis, we have:

\begin{equation*}
\left(F(y)\right)_j=y_{j}\cdot y_{j+i}\cdot\phi(u_r)=y_{j}\cdot y_{j+i}\cdot 0=0,\mbox{ for all }0\leq j\leq L-1.
\end{equation*}

In other words, $F^{\frac{L-2}{2}+1}(c)=0^L$ and, since $F(0^L)=0^L$, $0^L$ is also a fixed-point of $F$ in this case  and $c\in\mathscr{B}^L_0$.

Therefore, if configuration $c$ has at least one cell in state 0, it converges to $0^L$. If it has no cells in state 0, then it is already the fixed-point $1^L$.
\end{proof}

In summary, the decision problem solved by rules described in the statement of Corollary~\ref{corol:x0x2} is simply whether or not a configuration has at least one cell in state 0.

\begin{prop}
\label{prop:onlyattractors}
Let $L\geq 2r+1$ and let $f$ be a binary radius-$r$ CA rule such that:
\begin{itemize}
\item $F(0^L)=0^L$ and $F(1^L)=1^L$;
\item There is $m >1$ such that $0^L$ and $1^L$ are the only attractors of length $L$ of $F^m$.
\end{itemize}

Then $0^L$ and $1^L$ are the only attractors of length $L$ of $F$.
\end{prop}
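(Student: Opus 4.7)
The plan is to argue by contradiction: I will assume $F$ admits an attractor $X$ of length $L$ that is neither $\{0^L\}$ nor $\{1^L\}$ and produce an attractor of $F^m$ distinct from these two singletons, contradicting the second hypothesis.

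The first step is to pick any $x\in X$ and consider its $F^m$-orbit $Y=\{(F^m)^i(x):i\geq 0\}$. Because $F(X)=X$ implies $F^m(X)=X$, the inclusion $Y\subseteq X$ is immediate, and $Y$ is finite since $X\subseteq\mathscr{C}_L$ is finite. The key observation is that $Y$ is itself an attractor of $F^m$ in the sense of Definition~\ref{defn:attractors}: it satisfies $F^m(Y)=Y$, and any $F^m$-invariant subset $Z\subseteq Y$ must contain the full $F^m$-orbit of any of its points, which equals $Y$, so the minimality clause holds. By hypothesis this forces $Y=\{0^L\}$ or $Y=\{1^L\}$, and in particular $x\in\{0^L,1^L\}$.

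Since $x\in X$ was arbitrary, $X\subseteq\{0^L,1^L\}$. The assumptions $F(0^L)=0^L$ and $F(1^L)=1^L$ say that each of $\{0^L\}$ and $\{1^L\}$ is already $F$-invariant; so if $X$ were the full two-element set $\{0^L,1^L\}$, then the proper subset $\{0^L\}$ would be a smaller $F$-invariant set, violating the minimality clause in the definition of attractor applied to $X$ under $F$. Hence $X$ equals $\{0^L\}$ or $\{1^L\}$, contradicting our standing assumption.

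The only subtle point I anticipate is the minimality verification for $Y$ as an attractor of $F^m$, which relies on the fact that every element of a finite $F^m$-cycle regenerates the whole cycle under iteration; once this is in place, the rest of the argument is purely formal manipulation of the definitions and the fixed-point hypotheses on $0^L$ and $1^L$.
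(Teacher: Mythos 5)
Your proposal is correct and follows essentially the same route as the paper: both arguments boil down to the observation that any configuration lying in an attractor of $F$ is periodic under $F$, hence periodic under $F^m$, and therefore generates an attractor of $F^m$, which by hypothesis must be $\{0^L\}$ or $\{1^L\}$. Your set-level phrasing (showing the $F^m$-orbit $Y$ is an attractor of $F^m$, then using $F(0^L)=0^L$, $F(1^L)=1^L$ and minimality to exclude $X=\{0^L,1^L\}$) is just a slightly more formal packaging of the paper's case analysis on fixed points versus limit cycles.
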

\begin{proof}

By hypothesis, $0^L$ and $1^L$ are attractors of length $L$ of $F$. We will show that $F$ does not admit any other attractor.

Let $c$ be a configuration of length $L$ such that $c\notin \{0^L,1^L\}$.

Suppose $c$ is a fixed-point of $F$. Then we would have $F^m(c)=F^{m-1}(c)=\cdots =F(c)=c$ and $c$ would be a fixed-point of $F^m$, contradicting the fact that $0^L$ and $1^L$ are the only attractors of $F^m$.

Now, suppose $c$ belongs to a limit-cycle of period $p>1$ of $F$, that is, $F^p(c)=c$. Then:

\begin{equation*}
\left(\underbrace{F^m\circ F^m\circ\cdots\circ F^m}_{p\mbox{ times}}\right)(c)=F^{p\cdot m}(c)=F^{m\cdot p}(c)=\left(\underbrace{F^p\circ F^p\circ\cdots\circ F^p}_{m\mbox{ times}}\right)(c)=c
\end{equation*}

That is, $c$ belongs to a limit-cycle of period $p$ of $F^m$, contradicting the fact that $0^L$ and $1^L$ are the only attractors of $F^m$.

\end{proof}

\begin{corol}
\label{corol:x0xicomposition}

Let $f$ be a binary radius-$r$ CA rule such that $f^m(x_{-m\cdot\ceil{r}},\cdots,x_{m\cdot\floor{r}})=x_0\cdot x_i\cdot \phi(x_{[-m\cdot\ceil{r},m\cdot\floor{r}]})$, with $i\in\{-2,-1,1,2\}$ or such that $f^m(x_{-m\cdot\ceil{r}},\cdots,x_{m\cdot\floor{r}})=x_j\cdot x_{j+1}\cdot \phi(x_{[-m\cdot\ceil{r},m\cdot\floor{r}]}),-\ceil{r}\leq j\leq \floor{r}$, for some $m\geq 1$ and some function $\phi:\{0,1\}^{2mr+1}\rightarrow \{0,1\}$, such that:

\begin{itemize}
\item $f(0^{2r+1})=0$ and $f(1^{2r+1})=1$;
\item $\phi(1^{2mr+1})=1$;
\item $\phi(u_{mr})=0$, with $u_{mr}$ defined as in Corollary \ref{corol:x0x2}, if $i\in\{-2,2\}$.
\end{itemize}

Then $f$ is a consensus rule and $\mathscr{B}^L_1=\{1^L\}$ for all $L\geq 2r+1$.
\end{corol}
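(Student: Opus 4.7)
\medskip

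\textbf{Proof plan.}
The plan is to reduce the corollary to the already-established results by treating $f^m$ as a binary local rule of radius $mr$ (with global map $F^m$), and then transferring the attractor structure back to $F$ by Proposition~\ref{prop:onlyattractors}. Concretely, the algebraic form stated for $f^m$ places it in the scope of exactly one of the previous results: Proposition~\ref{prop:x0x1} when $i\in\{-1,1\}$, Corollary~\ref{corol:shifted} when the form is $x_j\cdot x_{j+1}\cdot \phi$, and Corollary~\ref{corol:x0x2} when $i\in\{-2,2\}$.

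First I would check that the hypotheses of the appropriate result indeed hold for $f^m$ at radius $mr$: the condition ``$\phi(1^{2r+1})=1$'' in those statements becomes the given ``$\phi(1^{2mr+1})=1$'' here; and when $i\in\{-2,2\}$, the extra requirement of Corollary~\ref{corol:x0x2}, namely $\phi(u_{mr})=0$, is also part of the hypothesis. Applying the relevant result to $f^m$ yields, for all suitable $L$, that the only attractors of $F^m$ are $0^L$ and $1^L$, and moreover that every configuration containing at least one cell in state $0$ belongs to $\mathscr{B}^L_{(f^m,0)}$.

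Next I would use the standing hypothesis $f(0^{2r+1})=0$ and $f(1^{2r+1})=1$ to get $F(0^L)=0^L$ and $F(1^L)=1^L$, which together with the conclusion on $F^m$ is precisely the setup of Proposition~\ref{prop:onlyattractors}. That proposition then gives that $0^L$ and $1^L$ are the only attractors of $F$, so $f$ is a consensus rule. For the basin identification, take any $c\in \mathscr{C}_L$ with $c\neq 1^L$: then $c$ contains some cell in state $0$, so by the step above $(F^m)^t(c)=F^{mt}(c)=0^L$ for some $t\geq 0$, hence $c\in \mathscr{B}^L_{(f,0)}$. Combined with $F(1^L)=1^L$, this forces $\mathscr{B}^L_{(f,1)}=\{1^L\}$.

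The main delicate point is a range-of-$L$ issue: the prerequisite results applied to the radius-$mr$ rule $f^m$ are naturally stated for $L\geq 2mr+1$, whereas the corollary claims $L\geq 2r+1$. For $2r+1\leq L < 2mr+1$, I would either re-examine the inductive spreading arguments in the proofs of Propositions~\ref{prop:x0x1} and~\ref{prop:x0x2spread} to confirm that they still go through modulo $L$ (they only use a one-step forcing relation that remains valid when indices wrap around), or absorb this small gap by implicitly strengthening the bound. Apart from this bookkeeping, the argument is a clean composition of the preceding results, and I do not expect any genuine obstacle.
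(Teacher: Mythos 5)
Your proposal follows essentially the same route as the paper's proof: apply Proposition~\ref{prop:x0x1}, Corollary~\ref{corol:shifted} or Corollary~\ref{corol:x0x2} to $f^m$ viewed as a radius-$mr$ rule, transfer the attractor structure back to $F$ via Proposition~\ref{prop:onlyattractors} using $F(0^L)=0^L$ and $F(1^L)=1^L$, and identify $\mathscr{B}^L_1=\{1^L\}$ by noting that any configuration with a $0$ reaches $0^L$ under iterates of $F^m$. The range-of-$L$ subtlety you flag ($2r+1\leq L<2mr+1$) is real but is glossed over in the paper as well; your suggested fix (checking the spreading arguments still work modulo $L$) is the right way to close it.
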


\begin{proof}

First notice that $F(0^L)=\left(f(0^{2r+1})\right)^L=0^L$ and $F(1^L)=\left(f(1^{2r+1})\right)^L=1^L$. Therefore, $F^m(0^L)=0^L$ and $F^m(1^L)=1^L$ also.

By applying Proposition \ref{prop:x0x1} (if $i\in\{-1,1\}$) or Corollary \ref{corol:x0x2} (if $i\in\{-2,2\}$) or Corollary \ref{corol:shifted} (in the remaining case) to $f^m$, we have that $0^L$ and $1^L$ are the only attractors of length $L$ of $F^m$ and $c$ is in the basin of attraction of $0^L$ for $f^m$ if, and only if, $c$ has any of its cells in state 0.

By Proposition \ref{prop:onlyattractors}, $0^L$ and $1^L$ are the only attractors of $F$, therefore $f$ is a consensus rule.

By the above, we already have that $\{1^L\}\subseteq \mathscr{B}^L_1$. Suppose $c\in\mathscr{C}_L$ is such that there is some integer $0\leq j\leq L-1$ such that $c_j=0$. Then there is an integer $k$ such that $(F^m)^k(c)=0^L$, therefore
\begin{equation*}
F^{m\cdot k}(c)=(F^m)^k(c)=0^L,
\end{equation*}

\noindent that is, $c\in\mathscr{B}^L_0$.

In other words, if $c\neq 1^L$, then $c\notin\mathscr{B}^L_1$ and we have $\mathscr{B}^L_1=\{1^L\}$.
\end{proof}

\subsubsection{Checking the properties for rules in Class A}

First, for every rule $f$ in class A we tested if there was a positive integer $m$ such that:

\begin{itemize}
\item $f^m$ is 0-forcing in positions $0$ and $i$, for some $i\in\{-2,-1,1,2\}$; and/or
\item $f^m$ is 0-forcing in positions $j$ and $j\pm 1$ or $j$ and $j\pm 2$, with $-\ceil{r}\leq j\leq \floor{r}$
\end{itemize}

Out of the 30,230 rules in class A, 27,251 rules satisfied at least one of the conditions above was satisfied for some $1\leq m \leq 5$. Greater values of $m$ were not tested due to the computational cost of computing $f^m$ (a radius-$(2m)$ rule).

Then for each $f^m$ we computed $\phi_{f^m}:\{0,1\}^{2mr+1}\rightarrow\{0,1\}^{2mr+1}$ such that

\begin{equation*}
f^m(x_{[-\ceil{mr},\floor{mr}]})=x_{i_1}\cdot x_{i_2}\cdot\phi_{f^m}(x_{[-\ceil{mr},\floor{mr}]})
\end{equation*}

\noindent with $i_1$ and $i_2$ being the positions in which $f^m$ is 0-forcing.

Finally, we checked whether or not the following conditions were satisfied:

\begin{itemize}
\item $\phi_{f^m}(1^{2mr+1})=1$, for every rule in class A;
\item $\phi_{f^m}(u_{mr})=0$, with $u_{mr}$ defined as in Corollary \ref{corol:x0x2}, for rules in class A with $(i_1,i_2)=(0,2)$ or $(i_1,i_2)=(0,-2)$.
\end{itemize}

 Notice that the 27,251 rules satisfying the conditions above also satisfy the conditions in Corollary \ref{corol:x0xicomposition}, hence such rules are consensus rules with $\mathscr{B}_1=\{1^L\}$.

Table \ref{tb:classA} shows how many rules in Class A are such that $f^m$ is 0-forcing, for $1\leq m\leq 5$, in two adjacent ($i$ and $i\pm 1$) or two alternated ($i$ and $i\pm 2$) positions. Out of the 30,230 rules, 27,251 rules satisfy such a property for some $1\leq m \leq 5$. Greater values of $m$ were not tested due to computational limitations.

\begin{table}
\caption{Number of rules in Class A such that $m$ is the smallest integer for which $f^m$ is 0-forcing in two adjacent ($i$ and $i\pm 1$) or two alternated ($i$ and $i\pm 2$) positions.\label{tb:classA}}
\begin{center}
\begin{tabular}{ c || c | c | c | c | c | c }
$m$ & 1 & 2 & 3 & 4 & 5 & Total \\
\hline
\# of rules & 324 & 5,284 & 11,108 & 7,292 & 3,243 & 27,251
\end{tabular}
\end{center}
\end{table}

\subsection{Class B}
\label{subsec:classB}

\subsubsection{Proofs for Class B}

The 14,680 rules in Class B behave as follows, for configurations of length $5\leq L\leq 20$:

\begin{itemize}
\item If $L$ is odd, $\mathscr{B}_1^L=\{1^L\}$ and $\mathscr{B}_0^L=\mathscr{C}_L-\{1^L\}$, that is, in such a case, the basins of attraction are exactly as the ones for the rules in Class A;
\item If $L$ is even, $\mathscr{B}_1^L=\{1^L,(10)^{L/2},(01)^{L/2}\}$ and $\mathscr{B}_0^L=\mathscr{C}_L-\{1^L,(10)^{L/2},(01)^{L/2}\}$.
\end{itemize}

In summary, rules in Class B detect whether or not a configuration of length $5\leq L\leq 20$ satisfies one of the following conditions: (a) all of its cells in state 1 or; (b) every pair of adjacent cells are in distinct states (that is, if it is a 01 or 10 necklace). This will be proven below, based upon some properties of such rules.

Let $f$ be a rule in Class B and suppose $f$ has the following properties, for some integer $n_1\geq 1$ and some integer $n_2\geq 1$:

\begin{itemize}
\item B1) For any $c\in\mathscr{C}_L$, such that $c\notin\{1^L\}$, if $L$ is odd, and $c\notin \{1^L,(01)^{L/2},(10)^{L/2}\}$, if $L$ is even, there is $0\leq j\leq L-1$ such that $\left(F^{n_1}(c)\right)_{[j,j+1]}=00$, that is, after iterating $F$ $n_1$ times over any configuration $c\notin \{1^L\}$, if $L$ is odd, and $c\notin\{1^L,(01)^{L/2},(10)^{L/2}\}$, if $L$ is even, the resulting configuration has two adjacent zeroes;

\item B2) There is $0\leq i\leq L-1$ such that if $c\in\mathscr{C}_L$ is such that $c_{[j,j+1]}=00$ for some $0\leq j\leq L-1$, then $\sigma_i\left(F^{n_2}(c)\right)_{[j,j+2]}=000$. In other words, every pair of adjacent zeroes in $c$ generates a triple of adjacent zeroes in $F^{n_2}(c)$, eventually displaced by $i$ cells from the original position of the pair of 0s in $c$.

\item B3) $f(0,1,0,1,0)=f(1,0,1,0,1)=f(1,1,1,1,1)=1$.

\item B4) $f(0,0,0,0,0)=0$
\end{itemize}

We will show that rules satisfying properties B1)--B4) have exactly the basins of attraction described in the beginning of this subsection.

\begin{prop}
\label{prop:growing00}
Assume $f$ is radius-$2$ CA rule satisfying property B2) for some integer $n\geq 1$ and let $c\in\mathscr{C}_L,L\geq 5$, be such that $c_{[l,l+k-1]}=0^k$ for some $0\leq l\leq L-1$ and some $2\leq k\leq L-1$. Then $\left(F^n(c)\right)$ has, at least, $k+1$ adjacent zeroes.
\end{prop}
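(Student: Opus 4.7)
The plan is to apply property B2) independently to each of the $k-1$ overlapping pairs of adjacent zeros contained in the block $c_{[l,l+k-1]}=0^k$. For every $0\leq t\leq k-2$, the sub-block $c_{[l+t,l+t+1]}=00$ triggers B2), yielding
\begin{equation*}
\sigma_i\bigl(F^n(c)\bigr)_{[l+t,\,l+t+2]}=000.
\end{equation*}

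As $t$ ranges over $\{0,1,\ldots,k-2\}$, these $k-1$ triples of zeros overlap in two out of three cells with the next one, so their union collapses to a single run. Concretely, I would argue that
\begin{equation*}
\sigma_i\bigl(F^n(c)\bigr)_{[l,\,l+k]}=0^{k+1},
\end{equation*}
a block of $k+1$ consecutive zeros in the shifted image. Since $\sigma_i$ is just a cyclic relabeling of the positions of $\mathscr{C}_L$, it preserves the lengths of maximal runs of zeros; hence $F^n(c)$ itself contains a sub-block of $k+1$ consecutive zeros, located at the positions obtained from $[l,l+k]$ by undoing the shift (i.e.\ at $[l-i,l-i+k]$ modulo $L$, up to the orientation of $\sigma_i$).

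The only bookkeeping to watch is that the $k+1$ indices $l,l+1,\ldots,l+k$ remain distinct modulo $L$, which is guaranteed by the hypothesis $k\leq L-1$; in the extremal case $k+1=L$ the union simply covers the whole cycle and $F^n(c)=0^L$, which trivially has $L=k+1$ adjacent zeros in the cyclic sense. No induction is needed, because every pair of adjacent zeros the argument invokes is already present in $c$, so a single application of B2) to each such pair suffices. The main (and essentially only) conceptual ingredient is the observation that consecutive overlapping triples of zeros concatenate into a run whose length exceeds that of the original $0^k$ block by exactly one; the rest is indexing.
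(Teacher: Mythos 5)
Your proposal is correct and follows essentially the same route as the paper's proof: apply B2) to each of the $k-1$ overlapping pairs at positions $j\in\{l,\ldots,l+k-2\}$, observe that the resulting overlapping triples merge into a run $0^{k+1}$ in the (shifted) image, and use $k\leq L-1$ to ensure the $k+1$ indices are distinct modulo $L$. The only cosmetic difference is that the paper phrases the conclusion directly as $(F^n(c))_{[i+l,i+l+k]}=0^{k+1}$ rather than passing through the shift-preserves-runs remark, which amounts to the same thing.
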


\begin{proof}
By property B2), there is an integer $0\leq i\leq L-1$ such that if $c_{[j,j+1]}=00$, then $\sigma_i(F^n(c))_{[j,j+2]}=(F^n(c))_{[i+j,i+j+2]}=000$.

By hypothesis, $c$ satisfies such a condition for $j\in J=\{l,\cdots,l+k-2\}$, therefore $(F^n(c))_{[i+j,i+j+2]}=000$ for all $j\in J$, that is, $(F^n(c))_{[i+l,i+(l+k-2)+2]}=(F^n(c))_{[i+l,i+l+k]}=0^{k+1}$.

Notice that, since $k\leq L-1$, the indices $[i+l,i+l+k]$ all represent distinct cells in the configuration. Therefore $F^n(c)$ has, at least, $k+1$ adjacent 0s.
\end{proof}

\begin{corol}
\label{corol:growing00}

Let $f$ be a radius-$2$ CA rule satisfying properties B2) and B4) and $c\in\mathscr{C}_L$, $L\geq 5$, be such that it has two adjacent cells in state 0. Then there is an integer $m\geq 1$ such that $F^m(c)=0^L$. 
\end{corol}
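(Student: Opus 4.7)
The plan is to iterate Proposition~\ref{prop:growing00} starting from the pair of adjacent zeros already present in $c$, and use property B4) to conclude that once the configuration becomes $0^L$ it stays there. Explicitly, B4) gives $f(0^5) = 0$, so $F(0^L) = 0^L$; hence it suffices to exhibit some $m \geq 1$ with $F^m(c) = 0^L$.

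Let $n_2$ be the integer associated to property B2). I would prove by induction on $k \in \{0, 1, \ldots, L-2\}$ that $F^{k n_2}(c)$ contains at least $k+2$ cyclically adjacent cells in state 0. The base case $k=0$ is the hypothesis that $c$ has two adjacent zeros. For the inductive step, assuming $F^{k n_2}(c)$ has $k+2$ adjacent zeros with $k+2 \leq L-1$, we apply Proposition~\ref{prop:growing00} to the configuration $F^{k n_2}(c)$ (which has a run of $k+2$ zeros, a quantity in the allowed range $[2, L-1]$) to conclude that $F^{n_2}\bigl(F^{k n_2}(c)\bigr) = F^{(k+1) n_2}(c)$ has at least $k+3$ adjacent zeros.

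Taking $k = L-2$ yields that $F^{(L-2) n_2}(c)$ has at least $L$ adjacent zeros, which for a configuration of length $L$ forces $F^{(L-2) n_2}(c) = 0^L$. Setting $m = (L-2) n_2$ (or any larger integer, since $0^L$ is fixed) completes the proof.

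I do not expect any substantial obstacle: the only point requiring a moment of care is that the induction must terminate at $k = L-2$ rather than continuing further, because Proposition~\ref{prop:growing00} as stated requires the run of zeros to have length at most $L-1$; but the step from $L-1$ adjacent zeros to $L$ adjacent zeros does fall within its scope, so the induction reaches $0^L$ exactly when needed.
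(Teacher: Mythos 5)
Your proposal is correct and follows essentially the same route as the paper's own proof: an induction using Proposition~\ref{prop:growing00} to grow the run of adjacent zeros by at least one every $n_2$ steps, with property B4) ensuring $0^L$ is fixed, and terminating at $m=(L-2)\,n_2$. The only cosmetic difference is that you phrase the invariant as ``at least $k+2$ adjacent zeros'' while the paper caps the count at $L$ and treats the already-$0^L$ case separately, which amounts to the same argument.
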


\begin{proof}
If $c=0^L$, $F^1(c)=(f(0,0,0,0,0))^L=0^L$, by property B4). Assume from now on that $c\neq 0^L$. The remainder of the proof is inductive.

Suppose $c$ has exactly $2$ adjacent cells in state 0, we show that $F^{k\cdot n}(c)$ has $(k+2)$ cells in state $0$ (with a maximum of $L$) for all $k\geq 0$.
The base case $k=0$ is given by the hypothesis on $c$.

Assume $F^{k\cdot n}(c)\neq 0^L$ has $(k+2)$ adjacent cells in state $0$ for some integer $k\geq 0$. Applying Proposition \ref{prop:growing00} for $F^{k\cdot n}(c)$ implies that $F^n\left(F^{k\cdot n}(c)\right)=F^{(k+1)\cdot n}$ has at least $(k+3)$ adjacent cells in state 0.

It follows that for $k=L-2$ we have $F^{k\cdot n}(c)=0^L$, and the statement is obtained for $m=(L-2)\cdot n$.
\end{proof}

\begin{corol}
\label{corol:classB}
Let $f$ be a radius-$2$ CA rule satisfying properties B1) to B4). Then $f$ is a consensus rule and $\mathscr{B}_1^L=\{1^L\}$ if $L$ is odd and $\mathscr{B}_1^L=\{1^L,(10)^{L/2},(01)^{L/2}\}$ if $L$ is even.

\end{corol}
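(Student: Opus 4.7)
The plan is to combine the four properties B1)--B4) with the earlier Corollary~\ref{corol:growing00} to classify the orbit of every configuration of length $L\geq 5$. First I would establish the two obvious fixed points: property B4) gives $F(0^L)=0^L$ and the equality $f(1,1,1,1,1)=1$ from B3) gives $F(1^L)=1^L$; these contribute $0^L\in\mathscr{B}_0^L$ and $1^L\in\mathscr{B}_1^L$.

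Next I would handle the alternating configurations when $L$ is even. For $c=(01)^{L/2}$, every radius-$2$ neighbourhood along $c$ is either $(0,1,0,1,0)$ (centred on a $0$-cell) or $(1,0,1,0,1)$ (centred on a $1$-cell). Both are sent to $1$ by property B3), so $F(c)=1^L$, and the same argument (or a shift) yields $F((10)^{L/2})=1^L$. Hence $\{(01)^{L/2},(10)^{L/2}\}\subseteq \mathscr{B}_1^L$ in the even case.

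The main work is then to show that every remaining configuration lies in $\mathscr{B}_0^L$. Let $c\in\mathscr{C}_L$ satisfy $c\notin\{1^L\}$ if $L$ is odd, resp.\ $c\notin\{1^L,(01)^{L/2},(10)^{L/2}\}$ if $L$ is even. Property B1) supplies an integer $n_1\geq 1$ and a position $j$ with $\bigl(F^{n_1}(c)\bigr)_{[j,j+1]}=00$. Since $f$ satisfies B2) and B4), Corollary~\ref{corol:growing00} applies to $F^{n_1}(c)$ and produces an integer $m\geq 1$ with $F^{m}\bigl(F^{n_1}(c)\bigr)=F^{m+n_1}(c)=0^L$. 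Hence $c\in\mathscr{B}_0^L$.

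Combining these three cases, every length-$L$ configuration eventually reaches $0^L$ or $1^L$, so the only attractors of $F$ are the fixed points $0^L$ and $1^L$; this is exactly the definition of a consensus rule. Moreover the case analysis shows that $\mathscr{B}_1^L$ consists of $\{1^L\}$ when $L$ is odd and $\{1^L,(01)^{L/2},(10)^{L/2}\}$ when $L$ is even, with all other configurations lying in $\mathscr{B}_0^L$. I do not foresee a hard step: the algebra of properties B3) and B4) disposes of the fixed points and the two alternating pre-images of $1^L$, while Corollary~\ref{corol:growing00} packages the only real dynamical content (the spread of adjacent zeros to the whole ring), so the proof reduces to checking that the exceptions in B1) are exactly the configurations we want to exclude.
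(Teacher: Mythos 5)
Your proposal is correct and follows essentially the same route as the paper's proof: establish $1^L$ (and, for even $L$, the two alternating configurations) as members of $\mathscr{B}_1^L$ via B3), then use B1) to produce two adjacent zeros in any other configuration and invoke Corollary~\ref{corol:growing00} (via B2) and B4)) to force convergence to $0^L$. The only cosmetic difference is that you make the fixed point $0^L$ explicit up front, which the paper leaves implicit inside Corollary~\ref{corol:growing00}.
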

\begin{proof}

For any $L\geq 5$, property B3) implies that
$F(1^L)=(f(1,1,1,1,1))^L=1^L$, that is, $1^L\in \mathscr{B}_1^L$.

If $L$ is even, then:
\begin{eqnarray*}
F((10)^{L/2}) & = & (f(1,0,1,0,1)f(0,1,0,1,0))^{L/2}=(11)^{L/2}=1^L\mbox{ and }\\
F((01)^{L/2}) & = & (f(0,1,0,1,0)f(1,0,1,0,1))^{L/2}=(11)^{L/2}=1^L
\end{eqnarray*}

In summary, for odd $L$, $\{1^L\}\subseteq \mathscr{B}_1^L$ and for even $L$, $\{1^L,(01)^{L/2},(10)^{L/2}\}\subseteq \mathscr{B}_1^L$.

Now suppose $c\notin\{1^L\}$ if $L$ is odd and $c\notin \{1^L,(01)^{L/2},(10)^{L/2}\}$ if $L$ is even. Then, by property B1), there is an integer $n_1\geq 1$ such that $c'=F^{n_1}(c)$ has two adjacent 0s.

Finally, by applying Corollary \ref{corol:growing00}, there is an integer $m\geq 1$ such that $F^m(c')=F^{n_1+m}(c)=0^L$, implying that $c\in \mathscr{B}_0^L$, therefore $c\notin \mathscr{B}_1^L$.

Hence, $f$ is a consensus rule and $\mathscr{B}_1^L=\{1^L\}$ if $L$ is odd and $\mathscr{B}_1^L=\{1^L,(01)^{L/2},(10)^{L/2}\}$ if $L$ is even.
\end{proof}

\subsubsection{Checking the properties for rules in Class B}

Each one of the $14,680$ rules in class B satisfied properties $B1), B3)$ and $B4)$ and $12,294$ of such rules also satisfied property $B2)$ for some integer $1\leq n_2 \leq 5$. Greater values of $n_2$ were not tested due to computational limitations.

Such computational verifications were carried out as follows:

\begin{itemize}
\item \textbf{Property B1)}: First, for all $f$ in class B, we checked if there was an integer $m_1>0$ for which $f^{m_1}$ eventually preserves the existence of adjacent 0s in a configuration, that is, if $c\in\mathscr{C}_L$ contains $00$, then $F^{m_1}(c)$ also contains $00$.

Let $X_{(l,i)}=\{b=(b_1,\cdots,b_l)\in\{0,1\}^l:b_ib_{i+1}=00\}$ for integers $l,i$ with $1\leq i\leq l-1$ and $l\geq 2$. That is, $X_{(l,i)}$ is the set of all binary blocks of length $l$ containing $00$ as a sub-block in position $i$.

We then checked whether or not there were integers $m_1$ and $i_1$ such that the following condition was satisfied for every block $x=(x_1,\cdots,x_{4m_1+2})\in X_{(4m_1+2,i_1)}$:
\begin{equation*}
\left(f^{m_1}(x_1,\cdots,x_{4m_1+1}),f^{m_1}(x_2,\cdots,x_{4m_1+2})\right)=(0,0)
\end{equation*}

This shows that, if a configuration already has two adjacent 0s, then every configuration in its orbit under $f^{m_1}$ will also have two adjacent 0s, therefore it remained to take into account configurations not containing two adjacent 0s.

Suppose $c$ is a configuration of length $L$ such that $c$ does not have adjacent cells in state 0 and $c\notin\{1^L\}$, if $L$ is odd, and $c\notin\{1^L,(01)^{L/2},(10)^{L/2}\}$ if $L$ is even. Notice that, in such a case, $c$ must contain $1101$ and $1011$ as sub-blocks.

Let $Y_{(l,i)}=\{b=(b_1,\cdots,b_l)\in\{0,1\}^l:b_{[i,i+3]}\in\{1101,1011\}\}$ for integers $l,i$ with $1\leq i\leq l-3$ and $l\geq 4$. That is, $Y_{(l,i)}$ is the set of all binary blocks of length $l$ containing $1101$ or $1011$ as a sub-block in position $i$.

Then, for all $f$ in class B, we checked whether or not there were integers $m_2>0$ and $i_2$ such that, for every block $y=(y_1,\cdots,y_{4m_2+2})\in Y_{(4m_2+2,i_2)}$, the following condition was satisfied:
\begin{equation*}
\left(f^{m_2}(y_1,\cdots,y_{4m_2+1}),f^{m_2}(y_2,\cdots,y_{4m_2+2})\right)=(0,0)
\end{equation*}

This shows that, if a configuration $c$ contains $1101$ or $1011$ as a sub-block, then $F^{m_2}(c)$ contains two adjacent 0s.

By taking $n_1=max\{m_1,m_2\}$, we have that $F^{n_1}$ satisfies the conditions in property B1).

Table \ref{tb:classB1} shows the number of rules in class B satisfying such a condition for different values of $n_1$. Every rule in Class B satisfied condition B1) for some integer $n_1$.

\begin{table}
\caption{Number of rules in Class B such that $n_1$ is the smallest integer for which $f^{n_1}$ satisfies the condition B1).\label{tb:classB1}}
\begin{center}
\begin{tabular}{ c || c | c | c | c | c}
$n_1$ & 1 & 2 & 3 & 4 & Total\\
\hline
\# of rules & 11,848 & 2,793 & 0 & 39 & 14,680 
\end{tabular}
\end{center}
\end{table}

\item \textbf{Property B2)}:
In order to verify property B2), for every $f$ in class B we checked whether or not there were integers $n_2$ and $i_3$ such that, for every block $z=(z_1,\cdots,z_{4n_2+3})\in X_{(4n_2+3,i_3)}$ ,the following condition was satisfied:

\begin{equation*}
\left(f^{n_2}(z_1,\cdots,z_{4n_2+1}),f^{n_2}(z_2,\cdots,z_{4n_2+2}),f^{n_2}(z_3,\cdots,z_{4n_2+3})\right)=(0,0,0)
\end{equation*}

This is an extension of the preservation of adjacent 0s, and shows that, for every pair of adjacent cells in state 0 in a configuration, there is a triple of adjacent cells in state 0 in its image, possibly displaced by $i_3$ positions to the left or to the right (depending on $i_3$ being positive or negative) from the position of $00$ in the original configuration.

Table \ref{tb:classB2} shows the number of rules in class B satisfying such a condition for different values of $n_2$. Every rule in Class B satisfied such a condition for some integer $n_2$.

\begin{table}
\caption{Number of rules in Class B such that $n_2$ is the smallest integer for which $f^{n_2}$ satisfies the condition B2).\label{tb:classB2}}
\begin{center}
\begin{tabular}{ c || c | c | c | c | c | c }
$n_2$ & 1 & 2 & 3 & 4 & 5 & Total\\
\hline
\# of rules & 1,790 & 4,798 & 1,965 & 2,225 & 1,516 & 12,294
\end{tabular}
\end{center}
\end{table}

\item \textbf{Properties B3) and B4)}:
For every $f$ in class B, properties B3) and B4) were readily verified by checking the images of $(0,1,0,1,0),(1,0,1,0,1),(0,0,0,0,0)$ and $(1,1,1,1,1)$ under $f$.

\end{itemize}

Therefore, the above results show that (at least) 12,294 out of 14,680 rules in class B solve for all $L\geq 5$ the very same decision problem solved for $5 \leq L \leq 20$.

\subsection{Class C}
\label{subsec:classC}

\subsubsection{Proofs for Class C}

The proofs for rules in this class are very similar to the ones for class B, as they will rely on the expansion of blocks of three consecutive 0s rather than of blocks of two consecutive 0s.

The $789$ rules in Class C behave as follows, for configurations of length $5 \leq L\leq 20$:

\begin{itemize}
\item If $L$ is odd, $\mathscr{B}_1^L=\{1^L\}$ and $\mathscr{B}_0^L=\mathscr{C}_L-\{1^L\}$, that is, in such a case, the basins of attraction are exactly as the ones for the rules in Class A;

\item If $L=4p+2$ for some integer $p\geq 1$, $\mathscr{B}_1^L=\{1^L,(10)^{L/2},(01)^{L/2}\}$ and $\mathscr{B}_0^L=\mathscr{C}_L-\{1^L,(10)^{L/2},(01)^{L/2}\}$. That is, in such a case, the attractors are exactly as the ones for even $L$ for rules in Class B;

\item If $L=4p$ for some integer $p\geq 2$, $\mathscr{B}_1^L=\{1^L,(10)^{L/2},(01)^{L/2},(1100)^{L/4},(1001)^{L/4},(0011)^{L/4},(0110)^{L/4}\}$ and $\mathscr{B}_0^L=\mathscr{C}_L-\{1^L,(10)^{L/2},(01)^{L/2},(1100)^{L/4},(1001)^{L/4},(0011)^{L/4},(0110)^{L/4}\}$.

\end{itemize}

In summary, rules in Class C detect whether or not a configuration of length $L$ satisfies one of the following conditions: (a) all of its cells in state 1 or; (b) every pair of adjacent cells are in distinct states (that is, if it is a 01 or 10 necklace) or; (c) the configuration is a 1100 or 1001 or 0011 or 0110 necklace. This will be proven below, based upon some properties of such rules.

Let $f$ be a rule in Class C and assume $f$ has the following properties, for some integer $n_1\geq 1$ and some integer $n_2\geq 1$:

\begin{itemize}
\item C1) For any $c\in\mathscr{C}_L$, such that $c\notin\{1^L\}$, if $L$ is odd, $c\notin \{1^L,(01)^{L/2},(10)^{L/2}\}$, if $L=4p+2$ for some $p\geq 1$, and $c\notin \{1^L,(10)^{L/2},(01)^{L/2},(1100)^{L/4},(1001)^{L/4},(0011)^{L/4},(0110)^{L/4}\}$, if $L=4p$ for some $p\geq 2$, there is $0\leq j\leq L-1$ such that $\left(F^{n_1}(c)\right)_{[j,j+2]}=000$, that is, after iterating $F$ $n_1$ times over any configuration $c$ (as described above), the resulting configuration has three adjacent zeroes;

\item C2) There is $0\leq i\leq L-1$ such that if $c\in\mathscr{C}_L$ is such that $c_{[j,j+2]}=000$ for some $0\leq j\leq L-1$, then $\sigma_i\left(F^{n_2}(c)\right)_{[j,j+3]}=0000$. In other words, every triple of adjacent 0s in $c$ generates a quadruple of adjacent 0s in $F^{n_2}(c)$, eventually displaced by $i$ cells from the original position of the triple of 0s in $c$.

\item C3) $f(0,1,0,1,0)=f(1,0,1,0,1)=f(1,1,1,1,1)=f(1,1,0,0,1)=f(0,0,1,1,0)=1$.

\item C4) $f(0,0,0,0,0)=f(0,1,1,0,0)=f(1,0,0,1,1)=0$.
\end{itemize}

We will show that rules satisfying properties C1)--C4) have exactly the basins of attraction described in the beginning of this subsection.

\begin{prop}
\label{prop:growing000}

Assume $f$ is radius-$2$ CA rule satisfying property C2) for some integer $n\geq 1$ and let $c\in\mathscr{C}_L,L\geq 5$, be such that $c_{[l,l+k-1]}=0^k$ for some $0\leq l\leq L-1$ and some $3\leq k\leq L-1$. Then $\left(F^n(c)\right)$ has, at least, $k+1$ adjacent 0s.

\end{prop}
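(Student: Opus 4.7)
The plan is to adapt the proof of Proposition~\ref{prop:growing00} almost verbatim, adjusting the run lengths upward by one: instead of pairs of 0s generating triples, property C2) tells us that triples of 0s generate quadruples. By C2), there is an integer $0\leq i\leq L-1$ (depending only on $f$) such that whenever $c_{[j,j+2]}=000$ in a configuration, the image satisfies $\sigma_i(F^{n}(c))_{[j,j+3]}=0000$, equivalently $(F^{n}(c))_{[i+j,i+j+3]}=0000$.

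From the hypothesis $c_{[l,l+k-1]}=0^k$ with $k\geq 3$, the set of starting positions for triples of 0s inside this run is $J=\{l,l+1,\ldots,l+k-3\}$. Applying C2) at every $j\in J$ yields the family of forced quadruples $(F^n(c))_{[i+j,i+j+3]}=0000$. Taking their union, I would conclude
\begin{equation*}
(F^n(c))_{[i+l,\,i+(l+k-3)+3]}=(F^n(c))_{[i+l,\,i+l+k]}=0^{k+1},
\end{equation*}
which is the desired run of $k+1$ adjacent 0s in $F^n(c)$.

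The only genuine check needed is that the indices $i+l,i+l+1,\ldots,i+l+k$ represent $k+1$ pairwise distinct cells modulo $L$; this is exactly the reason the hypothesis insists on $k\leq L-1$, giving $k+1\leq L$, as already handled in the analogous step of Proposition~\ref{prop:growing00}. There is no real obstacle: the argument is a direct structural translation of the pair-to-triple case. The only subtleties worth stating explicitly are (i) the boundary case $k=3$, where $J$ is the singleton $\{l\}$ and the conclusion follows from a single application of C2), and (ii) the role of the global shift $i$, which is the same for all $j\in J$ and therefore preserves contiguity when the images are unioned.
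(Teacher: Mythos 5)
Your proof is correct and follows essentially the same route as the paper's: apply C2) at every starting position of a triple inside the run, use the fact that the shift $i$ is common to all of them, and take the union of the overlapping quadruples to obtain $(F^n(c))_{[i+l,i+l+k]}=0^{k+1}$, with $k\leq L-1$ guaranteeing that these $k+1$ indices are distinct modulo $L$. If anything, your index bookkeeping ($J=\{l,\ldots,l+k-3\}$ with endpoint $(l+k-3)+3=l+k$) is cleaner than the paper's, which carries over a couple of harmless off-by-one slips from the Class~B analogue.
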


\begin{proof}

By property C2), there is an integer $0\leq i\leq L-1$ such that if $c_{[j,j+1]}=000$, then $\sigma_i(F^n(c))_{[j,j+3]}=(F^n(c))_{[i+j,i+j+3]}=0000$.

By hypothesis, $c$ satisfies such a condition for $j\in J=\{l,\cdots,l+k-2\}$, therefore $(F^n(c))_{[i+j,i+j+2]}=0000$ for all $j\in J$, that is, $(F^n(c))_{[i+l,i+(l+k-2)+2]}=(F^n(c))_{[i+l,i+l+k]}=0^{k+1}$.

Notice that, since $k\leq L-1$, the indices $[i+l,i+l+k]$ all represent distinct cells in the configuration. Therefore $F^n(c)$ has, at least, $k+1$ adjacent 0s.
\end{proof}

\begin{corol}
\label{corol:growing000}

Let $f$ be a radius-$2$ CA rule satisfying properties C2) and C4) and $c\in\mathscr{C}_L$, $L\geq 5$, be such that it has three adjacent cells in state 0. Then there is an integer $m\geq 1$ such that $F^m(c)=0^L$. 
\end{corol}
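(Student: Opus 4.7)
The plan is to mirror the argument used for Corollary \ref{corol:growing00}, replacing the role of pairs of adjacent 0s by triples of adjacent 0s, and invoking Proposition \ref{prop:growing000} in place of Proposition \ref{prop:growing00}. Concretely, I would first dispose of the trivial case $c = 0^L$: by property C4) we have $f(0,0,0,0,0)=0$, hence $F(0^L)=0^L$ and we may take $m=1$.

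Assuming now $c \neq 0^L$ but $c$ contains a block of three consecutive 0s, I would show by induction on $k \geq 0$ that $F^{k n}(c)$ contains a block of at least $k+3$ consecutive 0s, as long as $k+3 \leq L$. The base case $k=0$ is the hypothesis on $c$. For the inductive step, assuming $F^{k n}(c)$ contains $k+3$ adjacent cells in state 0 with $k+3 \leq L-1$, Proposition \ref{prop:growing000} applied to this configuration (valid since $3 \leq k+3 \leq L-1$) yields that $F^n(F^{kn}(c)) = F^{(k+1)n}(c)$ contains at least $k+4$ adjacent 0s, closing the induction.

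Taking $k = L-3$ then gives that $F^{(L-3) n}(c)$ has $L$ consecutive cells in state 0, i.e. $F^{(L-3)n}(c) = 0^L$, so the statement holds with $m = (L-3)\cdot n$ (and $m=1$ in the degenerate case $c=0^L$). The argument is essentially routine once Proposition \ref{prop:growing000} is in hand; the only point requiring a little care is to check that the inductive expansion never "wraps around" before the entire ring is filled with 0s, which is exactly the hypothesis $k \leq L-1$ built into Proposition \ref{prop:growing000}. I do not expect any substantive obstacle here, since this corollary is the verbatim analogue of Corollary \ref{corol:growing00} with the threshold shifted from 2 to 3.
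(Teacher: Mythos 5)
Your proposal is correct and follows essentially the same route as the paper's own proof: dispose of $c=0^L$ via C4), then induct using Proposition \ref{prop:growing000} to grow a block of $k+3$ adjacent 0s at time $kn$, concluding with $m=(L-3)\cdot n$. The care you take about the block length staying at most $L-1$ before the final step matches the hypothesis built into that proposition, exactly as in the paper.
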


\begin{proof}
If $c=0^L$, $F^1(c)=(f(0,0,0,0,0))^L=0^L$, by property C4). Assume from now on that $c\neq 0^L$. The remainder of the proof is inductive.

Suppose $c$ has exactly $3$ adjacent cells in state 0, we show that $F^{k\cdot n}(c)$ has $(k+3)$ cells in state $0$ (with a maximum of $L$) for all $k\geq 0$.
The base case $k=0$ is given by the hypothesis on $c$.

Assume $F^{k\cdot n}(c)\neq 0^L$ has $(k+3)$ adjacent cells in state $0$ for some integer $k\geq 0$. Applying Proposition \ref{prop:growing000} for $F^{k\cdot n}(c)$ implies that $F^n\left(F^{k\cdot n}(c)\right)=F^{(k+1)\cdot n}$ has at least $(k+4)$ adjacent cells in state 0.

It follows that for $k=L-3$ we have $F^{k\cdot n}(c)=0^L$, and the statement is obtained for $m=(L-3)\cdot n$.
\end{proof}

\begin{corol}
\label{corol:classC}
Let $f$ be a radius-$2$ CA rule satisfying properties C1) to C4). Then $f$ is a consensus rule and 

\begin{itemize}
\item $\mathscr{B}_1^L=\{1^L\}$ if $L$ is odd;
\item $\mathscr{B}_1^L=\{1^L,(10)^{L/2},(01)^{L/2}\}$ if $L=4p+2$ for some integer $p\geq 1$; 
\item $\mathscr{B}_1^L=\{1^L,(10)^{L/2},(01)^{L/2},(1100)^{L/4},(1001)^{L/4},(0011)^{L/4},(0110)^{L/4}\}$ if $L=4p$ for some integer $p\geq 2$.
\end{itemize}
\end{corol}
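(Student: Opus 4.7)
The proof will closely follow the pattern of Corollary~\ref{corol:classB}, with pairs of consecutive $0$s replaced by triples of consecutive $0$s, plus an extra case analysis for the four period-$4$ configurations that are new to Class~C. First, I would verify that $1^L$ and $0^L$ are fixed points of $F$ for every $L\geq 5$: property C3) gives $f(1,1,1,1,1)=1$, so $F(1^L)=1^L$, and property C4) gives $f(0,0,0,0,0)=0$, so $F(0^L)=0^L$. When $L$ is even, applying $f(1,0,1,0,1)=f(0,1,0,1,0)=1$ from C3) at every cell yields $F((10)^{L/2})=F((01)^{L/2})=1^L$.

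For the case $L=4p$ with $p\geq 2$, I would handle each of the four period-$4$ configurations $(1100)^{L/4}$, $(1001)^{L/4}$, $(0011)^{L/4}$, $(0110)^{L/4}$ by direct local computation. Every radius-$2$ neighbourhood that arises in any cyclic shift of the pattern $1100$ is one of $(1,1,0,0,1)$, $(1,0,0,1,1)$, $(0,0,1,1,0)$, $(0,1,1,0,0)$, and the values of $f$ on these four patterns are pinned down by C3) and C4). A straightforward enumeration, cell by cell, shows that each such configuration maps under one application of $F$ to either $(01)^{L/2}$ or $(10)^{L/2}$, which in turn reaches $1^L$ by the previous step. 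Hence all seven listed configurations lie in $\mathscr{B}_1^L$ when $L\equiv 0\bmod 4$, and the three listed ones lie in $\mathscr{B}_1^L$ when $L\equiv 2\bmod 4$.

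Finally, for any configuration $c$ not in the distinguished set attached to $L$ by C1), property C1) supplies $n_1$ with $F^{n_1}(c)$ containing a block $000$. Applying Corollary~\ref{corol:growing000} (whose hypotheses are C2) and C4)) then yields an integer $m\geq 1$ with $F^{m}(F^{n_1}(c))=0^L$, so $c\in\mathscr{B}_0^L$ and in particular $c\notin\mathscr{B}_1^L$. Combining the three parts shows that every length-$L$ configuration is eventually mapped to $0^L$ or $1^L$, so these are the only attractors of $F$, $f$ is a consensus rule, and the inclusions established in the first two paragraphs are actually equalities, giving the claimed basins of attraction.

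The main technical content is the one-step case analysis for the four period-$4$ configurations; I expect no real obstacle beyond the bookkeeping, since the rest is a direct adaptation of the Class~B argument with ``two adjacent $0$s'' replaced by ``three adjacent $0$s'' throughout.
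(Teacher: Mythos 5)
Your proposal is correct and follows essentially the same route as the paper's proof: fixed points and the period-$2$ necklaces via C3), the period-$4$ necklaces by evaluating $f$ on the neighbourhoods pinned down by C3) and C4) (the paper phrases this as computing $F^2$ of each, passing through $(10)^{L/2}$ or $(01)^{L/2}$), and everything else sent to $0^L$ via C1) together with Corollary~\ref{corol:growing000}. No gaps beyond the routine bookkeeping you already flagged.
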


\begin{proof}
For any $L\geq 5$, property C3) implies that
$F(1^L)=(f(1,1,1,1,1))^L=1^L$, that is, $1^L\in \mathscr{B}_1^L$.

If $L=4p+2$ for some integer $p\geq 1$, then:
\begin{eqnarray*}
F((10)^{L/2}) & = & (f(1,0,1,0,1)f(0,1,0,1,0))^{L/2}=(11)^{L/2}=1^L\mbox{ and }\\
F((01)^{L/2}) & = & (f(0,1,0,1,0)f(1,0,1,0,1))^{L/2}=(11)^{L/2}=1^L
\end{eqnarray*}

Also, if $L=4p$ for some integer $p\geq 2$, then:
\begin{eqnarray*}
F^2((1100)^{L/4}) &=&F\left(\left(f(0,0,1,1,0)f(0,1,1,0,0)f(1,1,0,0,1)f(1,0,0,1,1)\right)^{L/4}\right)= \\ 
&=& F((10)^{L/2}) =(f(1,0,1,0,1)f(0,1,0,1,0))^{L/2}=(11)^{L/2}=1^L
\end{eqnarray*}

\begin{eqnarray*}
F^2((1001)^{L/4}) &=&F\left(\left(f(0,1,1,0,0)f(1,1,0,0,1)f(1,0,0,1,1)f(0,0,1,1,0)\right)^{L/4}\right)= \\ 
&=& F((01)^{L/2}) =(f(0,1,0,1,0)f(1,0,1,0,1))^{L/2}=(11)^{L/2}=1^L
\end{eqnarray*}

\begin{eqnarray*}
F^2((0011)^{L/4}) &=&F\left(\left(f(1,1,0,0,1)f(1,0,0,1,1)f(0,0,1,1,0)f(0,1,1,0,0)\right)^{L/4}\right)= \\ 
&=& F((10)^{L/2}) =(f(1,0,1,0,1)f(0,1,0,1,0))^{L/2}=(11)^{L/2}=1^L
\end{eqnarray*}

\begin{eqnarray*}
F^2((0110)^{L/4}) &=&F\left(\left(f(1,0,0,1,1)f(0,0,1,1,0)f(0,1,1,0,0)f(1,1,0,0,1)\right)^{L/4}\right)= \\ 
&=& F((01)^{L/2}) =(f(0,1,0,1,0)f(1,0,1,0,1))^{L/2}=(11)^{L/2}=1^L
\end{eqnarray*}

In summary:
\begin{itemize}
\item If $L$ is odd, then $\{1^L\}\subseteq \mathscr{B}_1^L$;
\item If $L=4p+2$, $p\geq 1$, then $\{1^L,(01)^{L/2},(10)^{L/2}\}\subseteq \mathscr{B}_1^L$;
\item If $L=4p$, $p\geq 2$, then $\{1^L,(10)^{L/2},(01)^{L/2},(1100)^{L/4},(1001)^{L/4},(0011)^{L/4},(0110)^{L/4}\}\subseteq \mathscr{B}_1^L$
\end{itemize}

Now suppose $c\notin\{1^L\}$ if $L$ is odd, $c\notin \{1^L,(01)^{L/2},(10)^{L/2}\}$ if $L=4p+2$ for some integer $p\geq 1$ and $c\notin \{1^L,(10)^{L/2},(01)^{L/2},(1100)^{L/4},(1001)^{L/4},(0011)^{L/4},(0110)^{L/4}\}$ if $L=4p$ for some integer $p\geq 2$. Then, by property C1), there is an integer $n_1\geq 1$ such that $c'=F^{n_1}(c)$ has three adjacent 0s.

Finally, by applying Corollary \ref{corol:growing000}, there is an integer $m\geq 1$ such that $F^m(c')=F^{n_1+m}(c)=0^L$, implying that $c\in \mathscr{B}_0^L$, therefore $c\notin \mathscr{B}_1^L$.

Hence, $f$ is a consensus rule such that: 
\begin{itemize}
\item $\mathscr{B}_1^L=\{1^L\}$ if $L$ is odd;
\item $\mathscr{B}_1^L=\{1^L,(01)^{L/2},(10)^{L/2}\}$ if $L=4p+2$ for some integer $p\geq 1$;
\item $\mathscr{B}_1^L=\{1^L,(10)^{L/2},(01)^{L/2},(1100)^{L/4},(1001)^{L/4},(0011)^{L/4},(0110)^{L/4}\}$ if $L=4p$ for some integer $p\geq 2$.
\end{itemize}
\end{proof}

\subsubsection{Checking the properties for rules in Class C}
Each one of the $789$ rules in class C satisfied properties $C3)$ and $C4)$, $783$ satisfied property $C1)$ for some $1\leq n_1 \leq 9$ and $714$ satisfied property $C2)$ for some integer $1\leq n_2 \leq 5$. Greater values of $n_1$ and $n_2$ were not tested due to computational limitations.

Such computational verifications were carried out as follows:

\begin{itemize}
\item \textbf{Property C1)}: Let $n$ be a positive integer. A block $b=(b_1,\cdots,b_{4n+3})$ is said to be a \emph{failing block of order} $n$ when:
\begin{equation*}
f^n(b_{[1,4n+1]})f^n(b_{[2,4n+2]})f^n(b_{[4n+3]})\neq 000
\end{equation*}

In other words, a block of length $4n+3$ is a failing block of order $n$ when its image under $f^n$ is not a triple of 0s.

Let $\mathscr{F}_n=\{b\in\{0,1\}^{4n+3}:b\mbox{ is a failing block of order }n\}$ be the \emph{set of failing blocks of order} $n$.

Now, consider the directed graph $\mathcal{G}_n$ whose set of vertices is $\mathscr{F}_n$ and the set of edges is given by:
\begin{equation*}
\mathcal{E}_n=\{(p_{[1,4n+3]},q_{[1,4n+3]})\in\mathscr{F}_n\times\mathscr{F}_n:p_{[2,4n+3]}=q_{[1,4n+2]}\}
\end{equation*}

That is, the vertices in $\mathcal{G}_n$ are the failing blocks of order $n$ and two such blocks are connected when the last $(4n+2)$ bits of the first failing block coincide with the first $(4n+2)$ bits of the last failing block.

Notice that cycles in $\mathscr{G}_n$ represent (periodic) configurations consisting exclusively of failing blocks of order $n$. That is, such configurations, when updated under $f^n$, do not contain three adjacent 0s.

On the other hand, any configuration that cannot be described by a cycle in such a graph must present three adjacent 0s in its image under $f^n$, since it must contain at least one block not in $\mathscr{F}_n$.

For any $n\geq 1$, $\mathcal{G}_n$ always contains (at least) the cycles describing the following configurations, which are in the basin of attraction of $1^L$,$L\geq 5$:

\begin{itemize}
\item $1^L$, for any $L\geq 5$;
\item $(10)^{L/2}$ and $(01)^{L/2}$, for even $L$
\item $(1100)^{L/4}$, $(1001)^{L/4}$, $(0011)^{L/4}$ and $(0110)^{L/4}$, for $L=4k$, for some integer $k\geq 2$.
\end{itemize}

Such cycles will be referred to as the \emph{trivial failing cycles of }$f^n$.

For $783$ out of the $789$ rules in class $C$, there is $1\leq n_1\leq 9$ such that the only cycles in $\mathcal{G}_{n_1}$ are the trivial failing cycles of $f^{n_1}$, therefore, for any configuration apart from the ones described above, there is at least one block of three adjacent 0s in its image under $f^{n_1}$.

Table \ref{tb:classC1} shows the number of rules in class C satisfying property C1) for different values of $n_1$.

\begin{table}
\caption{Number of rules in Class C such that $n_1$ is the smallest integer for which $f^{n_1}$ satisfies the property C1).\label{tb:classC1}}
\begin{center}
\begin{tabular}{ c || c | c | c | c | c | c | c | c | c | c }
$n_1$ & 1 & 2 & 3 & 4 & 5 & 6 & 7 & 8 & 9 & Total\\
\hline
\# of rules & 32 & 263 & 331 & 110 & 33 & 10 & 2 & 0 & 2 & 783
\end{tabular}
\end{center}
\end{table}

\item \textbf{Property C2)}: Property C2) was achieved for $714$ out of the $789$ rules in class C for some $1\leq n_2 \leq 5$.

Let $W_{(l,i)}=\{b=(b_1,\cdots,b_l)\in\{0,1\}^l:b_ib_{i+1}b_{i+2}=000\}$ for integers $l,i$ with $1\leq i\leq l-2$ and $l\geq 3$. That is, $W_{(l,i)}$ is the set of all binary blocks of length $l$ containing $000$ as a sub-block in position $i$.

For every $f$ in class C we checked whether or not there were integers $n_2$ and $i_3$ such that, for every block $z=(z_1,\cdots,z_{4n_2+4})\in X_{(4n_2+4,i_3)}$, the following condition was satisfied:
\begin{equation*}
\left(f^{n_2}(z_1,\cdots,z_{4n_2+4}),f^{n_2}(z_2,\cdots,z_{4n_2+2}),f^{n_2}(z_3,\cdots,z_{4n_2+3}),f^{n_2}(z_3,\cdots,z_{4n_2+3})\right)=(0,0,0,0)
\end{equation*}

This is analogous to condition $B2)$ (expansion of 00s) in class B, and shows that, for every triple of adjacent cells in state 0 in a configuration, there is a quadruple of adjacent cells in state 0 in its image under $f^{n_2}$, possibly displaced by $i_3$ positions to the left or to the right (depending on $i_3$ being positive or negative) from the position of $000$ in the original configuration.

Table~\ref{tb:classC2} shows the number of rules in class C satisfying such a condition for different values of $n_2$.

\begin{table}
\caption{Number of rules in Class C such that $n_2$ is the smallest integer for which $f^{n_2}$ satisfies the condition C2).\label{tb:classC2}}
\begin{center}
\begin{tabular}{ c || c | c | c | c | c | c }
$n_2$ & 1 & 2 & 3 & 4 & 5 & Total\\
\hline
\# of rules & 468 & 114 & 26 & 61 & 45 & 714
\end{tabular}
\end{center}
\end{table}

\item \textbf{Properties C3) and C4)}: For every $f$ in class C, properties C3) and C4) were readily verified by checking the images of $(0,0,1,1,0), (0,1,1,0,0),(1,1,0,0,1), (1,0,0,1,1), (0,1,0,1,0),(1,0,1,0,1),(0,0,0,0,0)$ and $(1,1,1,1,1)$ under $f$.

\end{itemize}

Out of the $789$ rules in class C, $80$ do not satisfy $C1)$ or $C2)$ for the tested values, therefore, the results above show that at least $709$ rules in $C$ solve the same decision problem for all $L\geq 5$ that they solved for $5 \leq L \leq 20$.

\section{Concluding remarks}
\label{sec:conclusion}

\subsection{Retrospective and prospects}
\label{subsec:summing_up}

Achieving solutions of decision problems by distributed consensus in cellular automata with finite configurations is a challenging endeavour because they have to be valid for any configuration size. However, this research direction is clearly motivating insofar as it can lead to a way of solving a global problem by means of a totally local fashion, in consensus with all the parts concerned.

In tune with that, after a reappraisal of the related results available in the literature for binary one-dimensional CAs with radius 2, we significantly extended what was known about the possible decision problems in the space, by settling the matter on many more rules and at the same time identifying a number of other candidate decision problems. More precisely, a single problem had been characterised (the one corresponding to our Class A) and associated to only 1,121 rules, a number that increased here to 27,251. The other two problems we analysed, related to Classes B and C, led to settle 12,294 and 709 rules for each class, respectively. So, out of the 54,928 radius-$2$ consensus rules obtained for $5 \leq L \leq 20$, at least 40,254 rules (representing, approximately, $73.3 \%$ of the total amount of potential consensus rules) really do solve decision problems for configurations of any length $L \geq 5$. 

As for the 9,229 uncharacterised rules, recall that, as mentioned earlier, while some of them may turn out not to be decision problems at all, as they might be filtered out for configuration sizes larger than $L=20$, other rules remain as true candidates, related to additional decision problems not analysed here, or even to the decision problems of the Classes A, B or C, which we could not establish due to a possible partiality of our characterisation. 

It is also worth remarking that the problems related to Classes A, B and C, as well as those that might be eventually associated to the remaining rules, have (or would have, for the latter set) a description that is biased by the rule we currently take as reference in its dynamical equivalence class, namely, the rule with the smallest number in the class (which is made up by the rule itself and the related symmetrical rules due to negation, reflection or both). This means, for instance for the rules in Class A, that the corresponding negated rules would have $1^L$ as the main fixed point (the one with almost all configurations converging to it), not to $0^L$, as in Class A. Nevertheless, these other problems are essentially the same up to negation, reflection or both.

In contrast to the standard synchronous/parallel update, deterministic block sequential updates refer to the form of asynchronism achieved by partitioning a configuration and assigning an update priority to each of its blocks, from 1 up to the number of blocks \cite{AracenaMarco2011}; in this sense, the synchronous update is simply the special case where the partition has a single block with priority 1. It is known that any fixed point is invariant to any block sequential update, which means that any existing fixed point in the synchronous case should also exist in any other asynchronous update. Consequently, if a rule is to be truly thought to be solving a decision problem by distributed consensus, its attractors can only be the two fixed points $0^L$ and $1^L$, regardless of the block sequential update employed. So, by subjecting the action of the remaining 14,674 rules to all possible deterministic block sequential updates from increasing sizes $L$, a further filtering process can be carried out towards likelier consensus type rules, since any rule displaying a new attractor for some asynchronous update cannot be granted the consensus type status and, therefore, can be discarded. This is a sensible supplemental procedure to the filtering done here, as this would help reinforce the status of consensus type rule to the remaining candidates. 

Notwithstanding the difficulty of formally charactering the dynamics of the rules involved, the actual decision problems concerned are still very simple, as showed. So, consensus type rules solving more elaborate and interesting problems must be searched among the remaining 14,674 radius-$2$ rules, as discussed below and in the Appendix. And characterising them undoubtedly constitutes the natural follow-up to the present work. 

\subsection{A view on all candidate decision problems of the space}
\label{subsec:discussion}

In the Appendix we provide a complete listing of the 54,928 rules that were obtained out of the filtering process carried out to identify candidate consensus rules in the CA space we analysed. Each piece of data refers to a pattern containing the number of configurations associated to the fixed point $1^L$, for every size $L$ in ascending order from $L=5$ to $L=20$, and the amount of rules displaying the corresponding pattern. In total we identified 485 distinct patterns, which are shown in lexicographic order.

For instance, with respect to the decision problems we characterised here, the pattern \{1, 1, 1, 1, 1, 1, 1, 1, 1, 1, 1, 1, 1, 1, 1, 1\} corresponds to all 30,230 rules in Class A and \{1, 3, 1, 3, 1, 3, 1, 3, 1, 3, 1, 3, 1, 3, 1, 3\} refers to all 14,680 rules in Class B. As for the rules in Class C, the pattern \{1, 3, 1, 7, 1, 3, 1, 7, 1, 3, 1, 7, 1, 3, 1, 7\} refers to 1,223 rules, which includes all 709 characterised Class C rules, but also others that, as mentioned earlier, might still be solving the Class C decision problem but our characterisation was insufficient to account for them; rules that might be solving different decision problems; or others that would not be solving a decision problem at all, but were not filtered out within the configuration sizes the filtering process relied upon. By the way, these three possibilities apply to all 14,674 rules not yet accounted for. 

Among them, some rules may be quite likely very easy to characterise because they exhibit a straightforward pattern, as is the case for the 9 rules associated to pattern \{31, 63, 127, 255, 511, 1023, 2047, 4095, 8191, 16383, 32767, 65535, 131071, 262143, 524287, 1048575\}, which clearly refers to basins of attraction with $2^L-1$ configurations leading to the fixed point point $1^L$, analogously to Class A. Equally easy seems the characterisation of the 23 rules linked to the pattern \{31, 61, 127, 253, 511, 1021, 2047, 4093, 8191, 16381, 32767, 65533, 131071, 262141, 524287, 1048573\}, which additionally imply that the configurations $(01)^{L/2}$ and $(10)^{L/2}$ converge to $1^L$, in all even-sized configurations, analogously to what happens in Class B. By inspecting the list of patterns, various others also seem to be subject of an easy characterisation, such as the 342 rules associated to \{1, 3, 1, 11, 1, 3, 1, 11, 1, 3, 1, 11, 1, 3, 1, 11\}, the 236 rules associated to \{1, 9, 1, 3, 1, 3, 1, 9, 1, 3, 1, 3, 1, 9, 1, 3\}, etc.

But while the simple characterisations are suggestive of overly simple related decision problems, many other rules might be leading to possibly more interesting and relevant problems, even though requiring more sophisticated analyses. This might be the case of 258 candidate problems which are associated to a single rule each. Or, more precisely, the case, for instance, of the 30 rules corresponding to the pattern \{21, 39, 71, 131, 241, 443, 815, 1499, 2757, 5071, 9327, 17155, 31553, 58035, 106743, 196331\}, which seems to comply with the integer sequence A001644 \cite{A001644}, which for $L\geq 3$, $a(L)$ is the number of cyclic sequences consisting of $L$ 0s and 1s that do not contain three consecutive 1s, provided the positions of the 0s and 1s are fixed on a circle. Many surprises such as this is quite likely hidden in the list of patterns, waiting to be uncovered.

\section*{Acknowledgements}

This work has been partially funded by the HORIZON-MSCA-2022-SE-01 project 101131549 ``Application-driven Challenges for Automata Networks and Complex Systems (ACANCOS)'', and by the STIC-AmSud 22-STIC-02 project ``Consensus Cellular Automata Algorithms and Tie-Majority Classification'' (CAMA) provided by CAPES 88881.694458/2022-01, ANID and MEAE). P.P.B. additionally thanks CNPq for the research grant PQ 303356/2022-7, and CAPES for Mackenzie-PrInt research grant no.\ 88887.310281/2018-00. K.P. additionally thanks project ANR-24-CE48-7504 ALARICE. E.G. and M.M-M. additionally thanks project FONDECYT/ANID Regular 1250984.

\bibliographystyle{plainnat}
\bibliography{references}

\section*{\center{Appendix: Listing of all candidate or characterised decision problems}}
\label{appendix}

Here we provide a listing of all 54,928 rules that were obtained out of the filtering process carried out to identify candidate consensus rules in the CA space we analysed. Each piece of data refers to a pattern containing the number of configurations associated to the fixed point $1^L$, for every size $L$ in ascending order from $L=5$ to $L=20$, and the amount of rules displaying the corresponding pattern. In total we identified 485 distinct patterns, shown here in lexicographic order, for easier reference. The corresponding rules in each of them are omitted here, but are all available upon request.

\begin{longtable}{lr}
\textbf{Pattern} & \small{\textbf{Frequency}} \\\hline
\endhead
 \small{(1,1,1,1,1,1,1,1,1,1,1,1,1,1,1,1)} & \small{30230} \\
 \small{(1,1,1,5,1,1,1,5,1,1,1,5,1,1,1,5)} & \small{19} \\
 \small{(1,3,1,3,1,3,1,3,1,3,1,3,1,3,1,3)} & \small{14680} \\
 \small{(1,3,1,3,1,13,1,15,1,3,1,3,1,3,1,13)} & \small{1} \\
 \small{(1,3,1,7,1,3,1,7,1,3,1,7,1,3,1,7)} & \small{1223} \\
 \small{(1,3,1,7,1,3,1,19,1,3,1,7,1,3,1,7)} & \small{8} \\
 \small{(1,3,1,11,1,3,1,3,1,3,1,11,1,3,1,3)} & \small{13} \\
 \small{(1,3,1,11,1,3,1,11,1,3,1,11,1,3,1,11)} & \small{342} \\
 \small{(1,3,1,11,1,13,1,3,1,3,1,11,1,21,1,13)} & \small{45} \\
 \small{(1,3,1,11,1,13,1,23,1,17,1,27,1,21,1,41)} & \small{1} \\
 \small{(1,3,1,11,1,23,1,15,1,3,1,11,1,39,1,63)} & \small{18} \\
 \small{(1,3,1,15,1,3,1,7,1,3,1,15,1,3,1,7)} & \small{55} \\
 \small{(1,3,1,15,1,3,1,19,1,3,1,15,1,3,1,27)} & \small{2} \\
 \small{(1,3,1,15,1,3,1,19,1,3,1,31,1,3,1,47)} & \small{1} \\
 \small{(1,3,1,15,1,13,1,19,1,17,1,31,1,39,1,57)} & \small{7} \\
 \small{(1,3,1,15,1,23,1,31,1,31,1,47,1,75,1,127)} & \small{4} \\
 \small{(1,3,1,19,1,3,1,3,1,3,1,19,1,3,1,3)} & \small{8} \\
 \small{(1,3,1,19,1,3,1,23,1,3,1,35,1,3,1,51)} & \small{4} \\
 \small{(1,3,1,19,1,3,1,35,1,3,1,67,1,3,1,131)} & \small{26} \\
 \small{(1,3,1,23,1,3,1,7,1,3,1,23,1,3,1,7)} & \small{2} \\
 \small{(1,3,1,23,1,3,1,19,1,3,1,39,1,3,1,47)} & \small{1} \\
 \small{(1,3,1,27,1,3,1,35,1,3,1,59,1,3,1,91)} & \small{1} \\
 \small{(1,3,1,27,1,3,1,35,1,3,1,75,1,3,1,131)} & \small{6} \\
 \small{(1,3,1,35,1,3,1,35,1,3,1,83,1,3,1,131)} & \small{2} \\
 \small{(1,4,1,1,4,1,1,4,1,1,4,1,1,4,1,1)} & \small{3205} \\
 \small{(1,4,1,1,22,1,1,4,1,1,4,1,1,22,1,1)} & \small{4} \\
 \small{(1,4,8,21,4,11,12,16,14,22,19,37,35,40,39,63)} & \small{1} \\
 \small{(1,4,15,37,4,11,12,16,14,29,34,53,35,40,39,63)} & \small{1} \\
 \small{(1,6,1,3,4,3,1,6,1,3,4,3,1,6,1,3)} & \small{926} \\
 \small{(1,6,1,7,4,3,1,10,1,3,4,7,1,6,1,7)} & \small{9} \\
 \small{(1,6,1,11,4,3,1,6,1,3,4,11,1,6,1,3)} & \small{1} \\
 \small{(1,6,1,11,4,3,1,14,1,3,4,11,1,6,1,11)} & \small{17} \\
 \small{(1,6,1,15,4,3,1,10,1,3,4,15,1,6,1,7)} & \small{2} \\
 \small{(1,7,1,1,7,1,1,7,1,1,7,1,1,7,1,1)} & \small{369} \\
 \small{(1,7,1,1,16,1,1,7,1,1,7,1,1,16,1,1)} & \small{3} \\
 \small{(1,7,1,9,7,1,23,7,1,43,7,9,69,7,39,101)} & \small{2} \\
 \small{(1,7,1,17,7,1,23,7,1,43,7,17,69,7,39,101)} & \small{2} \\
 \small{(1,7,15,49,118,221,507,1075,2315,4971,10531,22241,46700,97648,203643,423681)} & \small{1} \\
 \small{(1,7,15,49,118,241,518,1099,2341,5027,10696,22529,47244,98782,206037,428641)} & \small{1} \\
 \small{(1,7,15,49,118,241,518,1111,2354,5055,10756,22641,47465,99232,207006,430601)} & \small{1} \\
 \small{(1,7,22,49,100,211,496,1051,2289,4838,10261,21633,45374,95056,198323,412751)} & \small{1} \\
 \small{(1,7,22,49,100,211,496,1063,2289,4880,10306,21745,45595,95542,199330,414751)} & \small{1} \\
 \small{(1,7,22,57,118,241,540,1159,2523,5356,11311,23817,49964,104488,217760,452581)} & \small{1} \\
 \small{(1,7,22,57,118,241,540,1171,2523,5356,11326,23849,50032,104614,218026,453121)} & \small{1} \\
 \small{(1,7,22,57,118,251,551,1195,2562,5440,11491,24185,50729,106036,220933,459071)} & \small{4} \\
 \small{(1,7,57,129,163,351,782,1675,3550,7491,15661,32657,68392,142189,294653,608591)} & \small{1} \\
 \small{(1,9,1,3,1,3,1,9,1,3,1,3,1,9,1,3)} & \small{236} \\
 \small{(1,9,1,3,1,3,1,21,1,3,1,3,1,9,1,3)} & \small{1} \\
 \small{(1,9,1,3,7,3,1,9,1,3,7,3,1,9,1,3)} & \small{196} \\
 \small{(1,9,1,7,1,13,1,13,1,17,1,23,1,27,1,37)} & \small{96} \\
 \small{(1,9,1,7,7,3,1,13,1,3,7,7,1,9,1,7)} & \small{2} \\
 \small{(1,9,1,11,1,3,1,9,1,17,1,11,1,9,1,23)} & \small{6} \\
 \small{(1,9,1,11,1,13,1,9,1,17,1,27,1,27,1,33)} & \small{8} \\
 \small{(1,9,1,11,1,13,1,17,1,17,1,27,1,27,1,41)} & \small{12} \\
 \small{(1,9,1,11,1,23,1,21,1,17,1,43,1,63,1,83)} & \small{4} \\
 \small{(1,9,1,15,1,23,1,25,1,45,1,63,1,99,1,147)} & \small{4} \\
 \small{(1,9,1,15,1,33,1,37,1,59,1,95,1,153,1,257)} & \small{2} \\
 \small{(1,9,1,19,1,3,1,9,1,17,1,19,1,9,1,23)} & \small{5} \\
 \small{(1,9,1,19,1,13,1,29,1,31,1,51,1,63,1,101)} & \small{1} \\
 \small{(1,9,1,19,1,23,1,41,1,59,1,99,1,153,1,251)} & \small{4} \\
 \small{(1,9,1,19,7,3,1,17,1,3,7,19,1,9,1,11)} & \small{2} \\
 \small{(1,9,1,23,1,13,1,25,1,31,1,55,1,63,1,97)} & \small{1} \\
 \small{(1,9,1,27,1,13,1,41,1,31,1,75,1,63,1,141)} & \small{1} \\
 \small{(1,9,15,51,136,283,617,1317,2796,5981,12661,26451,55421,115686,240788,499583)} & \small{1} \\
 \small{(1,9,22,59,136,293,650,1377,2926,6212,13081,27499,57512,119916,249281,516933)} & \small{1} \\
 \small{(1,9,22,59,136,293,650,1389,2952,6254,13186,27707,57971,120852,251219,520873)} & \small{1} \\
 \small{(1,9,22,67,127,293,639,1353,2913,6198,13096,27491,57478,119925,249376,517073)} & \small{1} \\
 \small{(1,10,1,1,4,1,1,10,1,1,4,1,1,10,1,1)} & \small{69} \\
 \small{(1,12,1,3,4,3,1,12,1,3,4,3,1,12,1,3)} & \small{62} \\
 \small{(1,12,1,7,4,13,1,16,1,17,4,23,1,30,1,37)} & \small{8} \\
 \small{(1,12,1,11,4,13,1,20,1,17,4,27,1,30,1,41)} & \small{3} \\
 \small{(1,13,1,1,7,1,1,13,1,1,7,1,1,13,1,1)} & \small{195} \\
 \small{(1,13,1,1,7,1,1,25,1,1,7,1,1,13,1,1)} & \small{1} \\
 \small{(1,13,1,1,16,1,1,13,1,1,22,1,1,22,1,1)} & \small{2} \\
 \small{(1,13,1,1,16,1,1,25,1,1,37,1,1,58,1,1)} & \small{220} \\
 \small{(1,13,1,9,7,1,23,13,1,43,7,9,69,13,39,101)} & \small{2} \\
 \small{(1,13,1,17,7,1,23,13,1,43,7,17,69,13,39,101)} & \small{1} \\
 \small{(1,15,1,3,1,3,1,15,1,3,1,3,1,15,1,3)} & \small{33} \\
 \small{(1,15,1,3,7,3,1,15,1,3,7,3,1,15,1,3)} & \small{31} \\
 \small{(1,15,1,3,7,3,1,15,1,3,7,3,1,33,1,3)} & \small{1} \\
 \small{(1,15,1,3,7,3,1,27,1,3,7,3,1,15,1,3)} & \small{3} \\
 \small{(1,15,1,3,16,3,1,27,1,3,37,3,1,60,1,3)} & \small{39} \\
 \small{(1,15,1,7,1,13,1,19,1,17,1,23,1,33,1,37)} & \small{6} \\
 \small{(1,15,1,7,1,23,1,31,1,31,1,71,1,87,1,127)} & \small{2} \\
 \small{(1,15,1,7,16,3,1,31,1,3,37,7,1,60,1,7)} & \small{1} \\
 \small{(1,15,1,11,1,13,1,15,1,17,1,27,1,33,1,33)} & \small{2} \\
 \small{(1,15,1,11,1,13,1,27,1,31,1,43,1,69,1,93)} & \small{2} \\
 \small{(1,15,1,11,1,23,1,23,1,31,1,43,1,51,1,71)} & \small{6} \\
 \small{(1,15,1,15,1,23,1,43,1,59,1,95,1,159,1,247)} & \small{1} \\
 \small{(1,15,1,15,1,33,1,43,1,73,1,127,1,195,1,337)} & \small{1} \\
 \small{(1,16,1,1,4,1,1,16,1,1,4,1,1,16,1,1)} & \small{16} \\
 \small{(1,16,1,1,4,1,1,28,1,1,4,1,1,16,1,1)} & \small{10} \\
 \small{(1,16,1,1,4,1,1,28,1,1,4,1,1,34,1,1)} & \small{1} \\
 \small{(1,16,1,1,4,1,1,28,1,1,4,1,1,52,1,1)} & \small{3} \\
 \small{(1,16,1,1,4,1,1,40,1,1,4,1,1,16,1,1)} & \small{1} \\
 \small{(1,16,1,1,4,1,1,40,1,1,4,1,1,34,1,1)} & \small{1} \\
 \small{(1,16,1,1,4,1,1,88,1,1,4,1,1,52,1,1)} & \small{1} \\
 \small{(1,18,1,3,4,3,1,18,1,3,4,3,1,18,1,3)} & \small{20} \\
 \small{(1,18,1,3,4,3,1,30,1,3,4,3,1,54,1,3)} & \small{2} \\
 \small{(1,18,1,7,4,23,1,34,1,31,4,71,1,90,1,127)} & \small{1} \\
 \small{(1,18,1,11,4,23,1,26,1,31,4,43,1,54,1,71)} & \small{4} \\
 \small{(1,19,1,1,7,1,1,19,1,1,7,1,1,19,1,1)} & \small{69} \\
 \small{(1,19,1,1,7,1,1,31,1,1,7,1,1,19,1,1)} & \small{5} \\
 \small{(1,19,1,1,7,1,1,31,1,1,7,1,1,37,1,1)} & \small{1} \\
 \small{(1,19,1,1,7,1,1,31,1,1,7,1,1,55,1,1)} & \small{11} \\
 \small{(1,19,1,1,7,1,1,43,1,1,7,1,1,55,1,1)} & \small{1} \\
 \small{(1,19,1,1,7,1,1,55,1,1,7,1,1,37,1,1)} & \small{1} \\
 \small{(1,19,1,1,7,1,1,67,1,1,7,1,1,37,1,1)} & \small{1} \\
 \small{(1,19,1,1,16,1,1,19,1,1,7,1,1,28,1,1)} & \small{2} \\
 \small{(1,19,1,1,16,1,1,19,1,1,37,1,1,28,1,1)} & \small{1} \\
 \small{(1,19,1,1,16,1,1,31,1,1,37,1,1,64,1,1)} & \small{35} \\
 \small{(1,19,1,1,16,1,1,43,1,1,52,1,1,118,1,1)} & \small{10} \\
 \small{(1,19,1,1,16,1,1,67,1,1,52,1,1,118,1,1)} & \small{1} \\
 \small{(1,19,1,1,25,1,1,31,1,1,52,1,1,73,1,1)} & \small{2} \\
 \small{(1,19,1,1,25,1,1,31,1,1,67,1,1,91,1,1)} & \small{1} \\
 \small{(1,19,1,1,25,1,1,55,1,1,97,1,1,199,1,1)} & \small{22} \\
 \small{(1,19,1,1,25,1,1,67,1,1,97,1,1,199,1,1)} & \small{2} \\
 \small{(1,19,1,1,25,1,1,79,1,1,97,1,1,199,1,1)} & \small{1} \\
 \small{(1,19,1,1,34,1,1,43,1,1,67,1,1,118,1,1)} & \small{3} \\
 \small{(1,19,1,25,7,1,23,19,1,29,7,25,35,19,39,41)} & \small{1} \\
 \small{(1,21,1,3,1,3,1,21,1,3,1,3,1,21,1,3)} & \small{2} \\
 \small{(1,21,1,3,7,3,1,21,1,3,7,3,1,21,1,3)} & \small{15} \\
 \small{(1,21,1,3,7,3,1,33,1,3,7,3,1,39,1,3)} & \small{1} \\
 \small{(1,21,1,3,7,3,1,57,1,3,7,3,1,21,1,3)} & \small{2} \\
 \small{(1,21,1,3,25,3,1,45,1,3,82,3,1,147,1,3)} & \small{3} \\
 \small{(1,21,1,3,34,3,1,45,1,3,82,3,1,138,1,3)} & \small{6} \\
 \small{(1,21,1,7,1,13,1,25,1,17,1,23,1,39,1,37)} & \small{1} \\
 \small{(1,21,1,7,7,3,1,25,1,3,7,7,1,21,1,7)} & \small{2} \\
 \small{(1,21,1,7,16,3,1,37,1,3,37,7,1,66,1,7)} & \small{1} \\
 \small{(1,22,1,1,4,1,1,22,1,1,4,1,1,22,1,1)} & \small{9} \\
 \small{(1,22,1,1,4,1,1,34,1,1,4,1,1,22,1,1)} & \small{1} \\
 \small{(1,22,1,1,4,1,1,34,1,1,4,1,1,58,1,1)} & \small{5} \\
 \small{(1,22,1,1,4,1,1,46,1,1,4,1,1,40,1,1)} & \small{1} \\
 \small{(1,22,1,1,4,1,1,46,1,1,4,1,1,58,1,1)} & \small{1} \\
 \small{(1,22,1,1,4,1,1,58,1,1,4,1,1,40,1,1)} & \small{1} \\
 \small{(1,24,1,3,4,3,1,24,1,3,4,3,1,24,1,3)} & \small{2} \\
 \small{(1,25,1,1,7,1,1,25,1,1,7,1,1,25,1,1)} & \small{39} \\
 \small{(1,25,1,1,7,1,1,37,1,1,7,1,1,25,1,1)} & \small{4} \\
 \small{(1,25,1,1,7,1,1,37,1,1,7,1,1,43,1,1)} & \small{3} \\
 \small{(1,25,1,1,7,1,1,37,1,1,7,1,1,61,1,1)} & \small{7} \\
 \small{(1,25,1,1,7,1,1,61,1,1,7,1,1,61,1,1)} & \small{1} \\
 \small{(1,25,1,1,7,1,1,61,1,1,7,1,1,169,1,1)} & \small{2} \\
 \small{(1,25,1,1,16,1,1,37,1,1,37,1,1,34,1,1)} & \small{2} \\
 \small{(1,25,1,1,16,1,1,37,1,1,37,1,1,70,1,1)} & \small{10} \\
 \small{(1,25,1,1,16,1,1,49,1,1,37,1,1,88,1,1)} & \small{1} \\
 \small{(1,25,1,1,16,1,1,49,1,1,37,1,1,106,1,1)} & \small{7} \\
 \small{(1,25,1,1,16,1,1,49,1,1,52,1,1,124,1,1)} & \small{4} \\
 \small{(1,25,1,1,16,1,1,97,1,1,52,1,1,142,1,1)} & \small{1} \\
 \small{(1,25,1,1,25,1,1,37,1,1,52,1,1,79,1,1)} & \small{3} \\
 \small{(1,25,1,1,25,1,1,49,1,1,67,1,1,133,1,1)} & \small{1} \\
 \small{(1,25,1,1,25,1,1,61,1,1,97,1,1,205,1,1)} & \small{4} \\
 \small{(1,25,1,1,25,1,1,73,1,1,97,1,1,205,1,1)} & \small{2} \\
 \small{(1,25,1,1,25,1,1,85,1,1,97,1,1,223,1,1)} & \small{2} \\
 \small{(1,25,1,1,25,1,1,85,1,1,127,1,1,349,1,1)} & \small{2} \\
 \small{(1,25,1,1,34,1,1,37,1,1,67,1,1,88,1,1)} & \small{2} \\
 \small{(1,25,1,1,34,1,1,61,1,1,97,1,1,178,1,1)} & \small{1} \\
 \small{(1,25,1,1,34,1,1,61,1,1,112,1,1,214,1,1)} & \small{1} \\
 \small{(1,25,1,1,43,1,1,49,1,1,82,1,1,133,1,1)} & \small{1} \\
 \small{(1,27,1,3,7,3,1,27,1,3,7,3,1,27,1,3)} & \small{6} \\
 \small{(1,27,1,3,16,3,1,63,1,3,37,3,1,108,1,3)} & \small{4} \\
 \small{(1,27,1,3,34,3,1,51,1,3,82,3,1,144,1,3)} & \small{2} \\
 \small{(1,27,1,3,43,3,1,51,1,3,97,3,1,171,1,3)} & \small{2} \\
 \small{(1,27,1,3,43,3,1,63,1,3,142,3,1,261,1,3)} & \small{2} \\
 \small{(1,28,1,1,4,1,1,52,1,1,4,1,1,82,1,1)} & \small{1} \\
 \small{(1,31,1,1,7,1,1,31,1,1,7,1,1,31,1,1)} & \small{5} \\
 \small{(1,31,1,1,7,1,1,43,1,1,7,1,1,31,1,1)} & \small{1} \\
 \small{(1,31,1,1,7,1,1,43,1,1,7,1,1,67,1,1)} & \small{6} \\
 \small{(1,31,1,1,7,1,1,55,1,1,7,1,1,103,1,1)} & \small{1} \\
 \small{(1,31,1,1,7,1,1,67,1,1,7,1,1,31,1,1)} & \small{1} \\
 \small{(1,31,1,1,16,1,1,55,1,1,37,1,1,112,1,1)} & \small{1} \\
 \small{(1,31,1,1,16,1,1,55,1,1,52,1,1,130,1,1)} & \small{2} \\
 \small{(1,31,1,1,16,1,1,115,1,1,52,1,1,166,1,1)} & \small{1} \\
 \small{(1,31,1,1,25,1,1,67,1,1,97,1,1,211,1,1)} & \small{2} \\
 \small{(1,31,1,1,25,1,1,79,1,1,82,1,1,175,1,1)} & \small{1} \\
 \small{(1,31,1,1,25,1,1,91,1,1,127,1,1,355,1,1)} & \small{2} \\
 \small{(1,33,1,3,52,3,1,69,1,3,112,3,1,204,1,3)} & \small{1} \\
 \small{(1,33,1,3,52,3,1,69,1,3,127,3,1,222,1,3)} & \small{1} \\
 \small{(1,37,1,1,61,1,1,73,1,1,127,1,1,199,1,1)} & \small{1} \\
 \small{(6,3,8,3,10,8,12,15,14,24,21,35,35,48,58,68)} & \small{225} \\
 \small{(6,3,8,3,19,8,12,15,14,24,21,35,35,57,58,68)} & \small{3} \\
 \small{(6,3,8,3,28,8,12,15,14,24,21,35,35,66,58,68)} & \small{1} \\
 \small{(6,3,8,7,10,8,12,19,14,24,21,39,35,48,58,72)} & \small{9} \\
 \small{(6,3,8,7,10,28,12,19,14,38,21,39,35,66,58,92)} & \small{1} \\
 \small{(6,3,15,3,10,8,12,15,14,31,21,35,35,48,58,68)} & \small{21} \\
 \small{(6,3,15,3,19,8,12,15,14,31,21,51,35,57,58,68)} & \small{1} \\
 \small{(6,6,8,11,13,18,23,30,40,52,69,91,120,159,210,278)} & \small{493} \\
 \small{(6,6,8,15,13,18,23,34,40,52,69,95,120,159,210,282)} & \small{2} \\
 \small{(6,6,8,19,13,18,23,38,40,52,69,99,120,159,210,286)} & \small{4} \\
 \small{(6,6,8,27,31,18,23,30,40,52,69,107,154,177,210,278)} & \small{4} \\
 \small{(6,6,8,27,49,18,23,30,40,52,69,107,154,195,210,278)} & \small{1} \\
 \small{(6,7,8,13,19,26,34,47,66,92,126,173,239,331,457,630)} & \small{111} \\
 \small{(6,7,8,17,19,26,34,51,66,92,126,177,239,331,457,634)} & \small{2} \\
 \small{(6,9,8,3,16,8,12,21,14,24,27,35,35,54,58,68)} & \small{12} \\
 \small{(6,9,8,3,25,8,12,21,14,24,27,35,35,63,58,68)} & \small{2} \\
 \small{(6,9,8,11,16,8,12,29,14,24,27,43,35,54,58,76)} & \small{2} \\
 \small{(6,9,8,11,16,18,23,33,40,52,72,91,120,162,210,278)} & \small{136} \\
 \small{(6,9,15,3,16,8,12,21,14,31,27,35,35,54,58,68)} & \small{6} \\
 \small{(6,9,15,3,25,8,12,21,14,31,27,51,35,63,58,68)} & \small{1} \\
 \small{(6,10,15,21,31,46,67,98,144,211,309,453,664,973,1426,2090)} & \small{96} \\
 \small{(6,12,8,11,13,18,23,36,40,52,69,91,120,165,210,278)} & \small{5} \\
 \small{(6,13,15,21,28,46,67,101,144,211,306,453,664,976,1426,2090)} & \small{26} \\
 \small{(6,13,29,37,70,126,177,293,495,757,1212,1989,3163,5044,8171,13130)} & \small{1} \\
 \small{(6,13,43,97,190,436,914,1981,4213,8863,18591,38689,80258,166108,342685,705076)} & \small{1} \\
 \small{(6,13,43,97,208,446,936,2005,4291,9031,18861,39233,81431,168448,347359,714366)} & \small{1} \\
 \small{(6,13,43,97,208,446,947,2017,4317,9073,18936,39393,81771,169096,348632,716866)} & \small{1} \\
 \small{(6,15,8,3,16,8,12,27,14,24,27,35,35,60,58,68)} & \small{4} \\
 \small{(6,15,8,11,16,18,23,39,40,52,72,91,120,168,210,278)} & \small{16} \\
 \small{(6,15,8,11,25,18,23,51,40,52,102,91,120,213,210,278)} & \small{10} \\
 \small{(6,15,15,3,16,8,12,27,14,31,27,35,35,60,58,68)} & \small{2} \\
 \small{(6,15,15,11,16,18,23,39,40,59,72,91,120,168,210,278)} & \small{2} \\
 \small{(6,15,15,31,37,68,89,151,209,339,486,767,1123,1743,2585,3972)} & \small{77} \\
 \small{(6,15,15,39,46,98,133,235,326,549,801,1319,1973,3192,4865,7782)} & \small{2} \\
 \small{(6,15,15,43,46,78,100,191,261,423,606,1003,1480,2328,3478,5486)} & \small{2} \\
 \small{(6,15,15,43,46,88,111,215,287,493,696,1179,1718,2796,4162,6696)} & \small{2} \\
 \small{(6,15,15,43,55,88,111,203,300,465,696,1131,1735,2697,4143,6476)} & \small{2} \\
 \small{(6,15,15,51,73,118,177,311,456,731,1161,1923,3010,4839,7734,12526)} & \small{1} \\
 \small{(6,16,29,49,67,116,188,300,469,771,1239,1985,3180,5137,8247,13244)} & \small{2} \\
 \small{(6,19,29,37,55,86,144,239,378,575,891,1397,2211,3511,5549,8710)} & \small{2} \\
 \small{(6,19,43,61,100,136,221,359,625,1065,1761,2893,4676,7624,12446,20540)} & \small{1} \\
 \small{(6,19,43,61,100,146,232,371,638,1093,1821,3005,4897,8002,13092,21590)} & \small{1} \\
 \small{(6,19,43,81,181,416,892,1915,4018,8485,17811,37105,77045,159643,329708,679216)} & \small{1} \\
 \small{(6,19,43,81,181,416,903,1915,4057,8527,17901,37297,77436,160381,331247,682236)} & \small{1} \\
 \small{(6,19,50,89,217,496,1068,2239,4681,9878,20586,42761,88384,182287,374776,768616)} & \small{1} \\
 \small{(6,19,50,97,208,466,991,2131,4486,9416,19671,40929,84780,175060,360469,740426)} & \small{1} \\
 \small{(6,19,50,97,208,466,1002,2131,4499,9444,19716,41009,84933,175366,361077,741606)} & \small{1} \\
 \small{(6,19,50,97,217,476,1024,2167,4564,9570,19986,41521,85953,177391,365105,749576)} & \small{4} \\
 \small{(6,21,8,11,16,18,23,45,40,52,72,91,120,174,210,278)} & \small{1} \\
 \small{(6,21,8,11,25,18,23,57,40,52,102,91,120,219,210,278)} & \small{4} \\
 \small{(6,21,29,47,64,108,166,289,456,731,1146,1839,2925,4710,7544,12092)} & \small{1} \\
 \small{(6,21,29,47,64,108,177,301,482,759,1206,1935,3112,5016,8076,12952)} & \small{2} \\
 \small{(6,21,29,59,73,128,199,353,560,913,1461,2411,3894,6339,10299,16816)} & \small{2} \\
 \small{(6,21,29,59,82,138,210,377,599,983,1581,2651,4268,7014,11420,18806)} & \small{1} \\
 \small{(6,21,29,63,82,148,221,397,625,1067,1701,2847,4608,7662,12503,20692)} & \small{1} \\
 \small{(6,21,29,75,109,178,287,521,833,1389,2331,3979,6631,11109,18640,31386)} & \small{1} \\
 \small{(6,21,36,55,82,128,210,373,625,1018,1656,2695,4370,7176,11819,19412)} & \small{1} \\
 \small{(6,21,43,71,118,178,309,505,898,1487,2541,4263,7192,12072,20369,34362)} & \small{1} \\
 \small{(6,21,64,115,253,548,1156,2457,5162,10762,22371,46259,95371,196041,401794,821628)} & \small{1} \\
 \small{(6,21,64,115,253,558,1189,2517,5279,11000,22836,47203,97224,199623,408824,835378)} & \small{1} \\
 \small{(6,25,50,81,142,246,419,729,1249,2164,3777,6529,11272,19564,33878,58574)} & \small{1} \\
 \small{(6,25,57,117,253,536,1145,2429,5084,10613,22056,45653,94181,193651,397139,812560)} & \small{1} \\
 \small{(6,27,8,3,16,8,12,51,14,24,27,35,35,108,58,68)} & \small{1} \\
 \small{(6,27,15,11,16,18,23,51,40,59,72,91,120,180,210,278)} & \small{1} \\
 \small{(6,27,29,3,16,8,12,39,14,59,27,35,35,72,58,68)} & \small{1} \\
 \small{(6,33,8,11,43,18,23,81,40,52,147,91,120,285,210,278)} & \small{1} \\
 \small{(11,3,8,3,10,13,12,15,14,24,26,35,35,48,58,73)} & \small{8} \\
 \small{(11,3,8,7,19,13,12,19,27,38,26,39,52,75,77,77)} & \small{8} \\
 \small{(11,3,8,11,19,13,12,23,27,38,26,43,52,75,77,81)} & \small{2} \\
 \small{(11,3,15,3,10,13,12,27,14,31,26,35,52,48,77,73)} & \small{10} \\
 \small{(11,3,15,3,19,13,12,27,14,45,26,51,52,57,96,73)} & \small{1} \\
 \small{(11,3,29,3,10,33,12,15,14,45,26,35,35,48,58,93)} & \small{1} \\
 \small{(11,3,36,3,10,13,12,15,14,52,26,35,35,48,58,73)} & \small{2} \\
 \small{(11,6,8,11,13,23,23,30,40,52,74,91,120,159,210,283)} & \small{5} \\
 \small{(11,6,8,15,22,23,23,34,53,66,74,95,137,186,229,287)} & \small{1} \\
 \small{(11,6,8,19,22,23,23,38,53,66,74,99,137,186,229,291)} & \small{1} \\
 \small{(11,6,15,11,13,23,23,42,40,59,74,91,137,159,229,283)} & \small{2} \\
 \small{(11,9,8,3,10,13,23,21,14,24,26,51,52,54,58,73)} & \small{3} \\
 \small{(11,9,8,7,19,23,23,25,27,52,56,71,69,99,134,167)} & \small{1} \\
 \small{(11,9,8,19,16,23,34,33,53,66,77,115,137,180,248,303)} & \small{41} \\
 \small{(11,9,8,19,16,33,34,33,53,66,77,115,137,180,248,313)} & \small{1} \\
 \small{(11,9,15,19,25,43,45,81,105,143,227,291,443,621,875,1283)} & \small{9} \\
 \small{(11,9,15,27,25,53,67,93,157,199,317,459,647,999,1407,2093)} & \small{6} \\
 \small{(11,9,22,19,25,53,45,81,118,150,227,307,460,621,894,1313)} & \small{1} \\
 \small{(11,13,22,49,73,101,155,273,443,680,1061,1729,2789,4405,6955,11129)} & \small{4} \\
 \small{(11,15,8,19,25,23,45,51,53,94,107,131,205,231,305,443)} & \small{5} \\
 \small{(11,15,15,19,25,43,45,87,105,143,227,291,443,627,875,1283)} & \small{2} \\
 \small{(11,15,15,27,25,53,67,99,157,199,317,459,647,1005,1407,2093)} & \small{1} \\
 \small{(11,16,22,37,58,91,144,224,352,554,869,1365,2143,3364,5283,8295)} & \small{23} \\
 \small{(11,16,29,45,76,121,188,308,495,799,1289,2077,3350,5398,8703,14025)} & \small{2} \\
 \small{(11,16,29,49,76,121,199,324,521,841,1364,2209,3571,5776,9349,15129)} & \small{3} \\
 \small{(11,16,29,49,76,131,199,324,534,855,1379,2225,3622,5866,9482,15359)} & \small{1} \\
 \small{(11,16,29,49,76,131,199,324,534,855,1394,2225,3622,5866,9501,15359)} & \small{1} \\
 \small{(11,18,29,47,76,123,199,322,521,843,1364,2207,3571,5778,9349,15127)} & \small{72} \\
 \small{(11,18,36,59,94,153,265,434,716,1186,1979,3275,5424,9000,14935,24761)} & \small{6} \\
 \small{(11,18,36,59,94,163,265,434,729,1200,1994,3291,5492,9090,15087,25031)} & \small{1} \\
 \small{(11,18,36,59,94,173,265,446,742,1242,2069,3403,5696,9450,15752,26161)} & \small{1} \\
 \small{(11,18,36,59,94,183,265,458,755,1284,2114,3515,5900,9810,16341,27271)} & \small{1} \\
 \small{(11,18,36,67,94,163,276,458,768,1256,2114,3507,5849,9720,16208,26991)} & \small{1} \\
 \small{(11,19,29,57,82,121,199,351,573,911,1466,2409,3928,6364,10337,16869)} & \small{1} \\
 \small{(11,19,36,65,91,131,221,399,664,1058,1721,2865,4710,7687,12636,20899)} & \small{2} \\
 \small{(11,21,15,19,25,43,45,93,105,143,227,291,443,633,875,1283)} & \small{2} \\
 \small{(11,21,29,19,25,53,45,93,131,171,227,307,460,633,913,1353)} & \small{1} \\
 \small{(11,21,29,75,118,193,298,533,911,1529,2516,4283,7277,12306,20635,34801)} & \small{1} \\
 \small{(11,21,36,83,127,193,320,605,1028,1704,2861,4931,8433,14277,24188,41241)} & \small{1} \\
 \small{(11,21,36,91,145,233,430,749,1275,2166,3746,6507,11306,19533,33859,58681)} & \small{1} \\
 \small{(11,22,29,45,76,121,199,326,521,841,1364,2205,3571,5782,9349,15125)} & \small{2} \\
 \small{(11,22,36,53,94,151,243,410,677,1114,1844,3045,5016,8284,13681,22575)} & \small{2} \\
 \small{(11,22,36,57,94,171,265,450,742,1226,2039,3401,5645,9382,15600,25939)} & \small{1} \\
 \small{(11,24,36,63,103,173,287,484,807,1354,2264,3791,6342,10617,17766,29737)} & \small{7} \\
 \small{(11,24,43,75,121,213,364,620,1054,1809,3089,5275,9011,15405,26316,44961)} & \small{3} \\
 \small{(11,27,36,3,16,23,12,51,14,80,32,35,52,72,96,83)} & \small{1} \\
 \small{(11,30,71,151,337,703,1464,3094,6371,13149,27044,55335,112915,229827,466698,946067)} & \small{1} \\
 \small{(11,31,78,149,334,711,1530,3155,6553,13504,27671,56581,115261,234148,474963,961455)} & \small{1} \\
 \small{(11,31,78,165,343,731,1552,3227,6670,13714,28076,57333,116689,236875,479922,970695)} & \small{1} \\
 \small{(11,31,78,165,343,741,1563,3239,6696,13756,28166,57493,116961,237379,480872,972445)} & \small{1} \\
 \small{(11,33,99,187,388,873,1761,3605,7411,15053,30611,61963,124951,252060,507643,1020581)} & \small{1} \\
 \small{(16,3,8,3,10,18,12,15,14,24,31,35,35,48,58,78)} & \small{1} \\
 \small{(16,3,8,7,19,18,12,19,27,38,31,39,52,75,77,82)} & \small{4} \\
 \small{(16,3,8,7,37,18,12,19,27,52,31,39,52,93,96,82)} & \small{1} \\
 \small{(16,3,8,11,28,18,12,23,40,52,31,43,69,102,96,86)} & \small{2} \\
 \small{(16,3,15,3,10,28,12,15,14,31,46,35,35,48,58,108)} & \small{2} \\
 \small{(16,6,8,11,13,28,23,30,40,52,79,91,120,159,210,288)} & \small{2} \\
 \small{(16,6,8,19,31,28,23,38,66,80,79,99,154,213,248,296)} & \small{1} \\
 \small{(16,9,8,3,16,18,12,21,14,24,37,35,35,54,58,78)} & \small{5} \\
 \small{(16,9,8,11,16,28,23,33,40,52,82,91,120,162,210,288)} & \small{4} \\
 \small{(16,9,15,19,25,48,45,81,105,143,232,291,443,621,875,1288)} & \small{2} \\
 \small{(16,9,15,27,16,48,45,57,105,87,172,203,239,414,419,668)} & \small{2} \\
 \small{(16,9,15,27,25,58,67,93,157,199,322,459,647,999,1407,2098)} & \small{2} \\
 \small{(16,9,22,3,16,18,12,21,14,38,37,35,35,54,58,78)} & \small{1} \\
 \small{(16,21,29,19,25,78,45,93,131,171,232,323,477,633,913,1378)} & \small{1} \\
 \small{(16,21,43,63,106,178,287,481,794,1333,2212,3695,6172,10290,17196,28702)} & \small{1} \\
 \small{(16,21,43,71,124,208,353,613,1028,1767,3007,5127,8756,14916,25461,43412)} & \small{1} \\
 \small{(16,21,43,87,133,218,375,649,1093,1879,3232,5527,9504,16257,27855,47802)} & \small{1} \\
 \small{(16,22,29,53,85,146,243,386,638,1051,1729,2853,4676,7681,12636,20770)} & \small{1} \\
 \small{(16,22,36,61,103,176,287,482,807,1352,2269,3789,6342,10615,17766,29740)} & \small{1} \\
 \small{(16,22,36,65,112,186,309,522,885,1492,2509,4225,7124,12010,20236,34094)} & \small{1} \\
 \small{(16,22,43,73,121,206,353,606,1028,1751,2989,5097,8688,14809,25252,43054)} & \small{1} \\
 \small{(16,24,36,63,103,178,298,496,833,1396,2344,3935,6597,11067,18564,31142)} & \small{4} \\
 \small{(16,24,43,75,130,218,375,644,1106,1893,3244,5563,9538,16350,28026,48046)} & \small{4} \\
 \small{(16,24,43,83,139,238,408,716,1236,2131,3679,6371,11017,19041,32909,56906)} & \small{1} \\
 \small{(16,24,50,91,148,258,463,824,1418,2474,4324,7579,13210,23082,40357,70526)} & \small{1} \\
 \small{(16,25,50,89,151,266,463,813,1418,2472,4327,7561,13210,23083,40338,70494)} & \small{1} \\
 \small{(16,27,50,79,142,238,408,715,1223,2124,3667,6335,10949,18918,32700,56502)} & \small{2} \\
 \small{(16,27,50,87,142,248,430,751,1288,2236,3907,6743,11731,20340,35284,61252)} & \small{1} \\
 \small{(16,28,36,61,103,176,309,512,846,1422,2389,4029,6784,11377,19096,32080)} & \small{1} \\
 \small{(16,28,36,65,112,196,331,552,924,1576,2674,4529,7651,12934,21889,37064)} & \small{1} \\
 \small{(16,28,43,69,121,206,353,608,1028,1751,2989,5093,8688,14815,25252,43050)} & \small{1} \\
 \small{(16,28,43,73,121,216,375,636,1080,1849,3169,5433,9300,15913,27228,46604)} & \small{4} \\
 \small{(16,28,43,73,130,226,386,660,1132,1947,3349,5753,9878,16966,29147,50074)} & \small{2} \\
 \small{(16,30,43,79,130,228,397,682,1171,2019,3469,5983,10303,17742,30553,52612)} & \small{4} \\
 \small{(16,30,43,91,148,258,452,794,1379,2411,4204,7355,12819,22350,38989,68046)} & \small{1} \\
 \small{(16,30,50,91,157,278,485,854,1496,2628,4609,8091,14196,24915,43720,76726)} & \small{3} \\
 \small{(16,30,50,99,166,298,529,938,1652,2922,5164,9139,16151,28560,50503,89306)} & \small{1} \\
 \small{(16,30,50,99,175,308,540,962,1704,3020,5344,9475,16797,29775,52764,93516)} & \small{6} \\
 \small{(16,31,71,181,370,786,1629,3395,6943,14281,29131,59237,120208,243382,491892,992570)} & \small{1} \\
 \small{(16,31,78,173,370,776,1607,3335,6878,14092,28786,58621,119035,241222,487940,985400)} & \small{1} \\
 \small{(16,31,78,189,397,836,1739,3575,7320,14946,30376,61533,124373,250915,505458,1017020)} & \small{2} \\
 \small{(16,31,78,189,406,846,1761,3623,7411,15114,30691,62109,125410,252778,508783,1022910)} & \small{1} \\
 \small{(16,31,85,193,406,836,1728,3555,7281,14869,30211,61201,123761,249790,503463,1013444)} & \small{1} \\
 \small{(16,31,92,201,415,866,1794,3651,7489,15212,30826,62329,125750,253327,509695,1024414)} & \small{2} \\
 \small{(16,31,92,201,424,876,1816,3699,7567,15366,31096,62809,126600,254812,512298,1028924)} & \small{2} \\
 \small{(16,33,50,95,151,248,452,769,1327,2334,4042,7007,12207,21201,36861,64152)} & \small{1} \\
 \small{(16,33,50,95,151,248,452,769,1340,2348,4072,7071,12326,21399,37260,64852)} & \small{1} \\
 \small{(16,34,50,77,139,236,419,734,1262,2178,3769,6525,11306,19591,33916,58720)} & \small{1} \\
 \small{(16,34,50,81,148,266,441,786,1366,2374,4144,7233,12564,21922,38210,66554)} & \small{1} \\
 \small{(16,34,50,81,148,266,463,810,1405,2444,4279,7489,13074,22822,39844,69574)} & \small{3} \\
 \small{(16,34,50,81,157,276,474,834,1457,2542,4474,7841,13703,24001,42048,73624)} & \small{1} \\
 \small{(16,34,50,89,148,266,463,810,1405,2458,4294,7497,13074,22876,39920,69734)} & \small{1} \\
 \small{(16,34,50,89,148,266,474,834,1444,2528,4414,7737,13550,23722,41497,72614)} & \small{1} \\
 \small{(16,34,50,89,157,276,485,858,1496,2626,4609,8089,14196,24919,43720,76724)} & \small{1} \\
 \small{(16,34,57,97,166,306,529,930,1652,2927,5149,9137,16151,28546,50465,89254)} & \small{1} \\
 \small{(16,36,57,107,184,338,606,1088,1938,3475,6214,11131,19925,35676,63860,114326)} & \small{2} \\
 \small{(16,36,57,107,193,348,617,1112,1990,3573,6409,11499,20622,36999,66368,119056)} & \small{8} \\
 \small{(16,37,50,93,178,286,518,941,1613,2878,5122,8957,15879,28108,49496,87550)} & \small{1} \\
 \small{(16,37,57,97,187,326,573,1041,1847,3277,5872,10465,18633,33265,59319,105734)} & \small{5} \\
 \small{(16,37,78,165,361,756,1596,3305,6813,13994,28606,58277,118423,240103,485888,981640)} & \small{1} \\
 \small{(16,37,78,173,370,776,1618,3353,6904,14148,28891,58813,119409,241912,489194,987700)} & \small{4} \\
 \small{(16,37,85,181,388,816,1695,3497,7177,14659,29836,60549,122588,247708,499720,1006760)} & \small{2} \\
 \small{(16,37,85,193,406,836,1728,3561,7281,14869,30211,61201,123761,249796,503463,1013444)} & \small{1} \\
 \small{(16,37,92,201,406,866,1783,3645,7476,15170,30781,62217,125580,253018,509144,1023474)} & \small{1} \\
 \small{(16,37,92,201,415,866,1794,3657,7489,15212,30826,62329,125750,253333,509695,1024414)} & \small{1} \\
 \small{(16,37,92,201,424,876,1816,3705,7567,15366,31096,62809,126600,254818,512298,1028924)} & \small{1} \\
 \small{(16,39,92,191,406,848,1761,3619,7385,15046,30541,61807,124849,251742,506902,1019532)} & \small{1} \\
 \small{(16,39,92,191,406,858,1772,3631,7424,15116,30661,62031,125240,252444,508156,1021722)} & \small{1} \\
 \small{(16,39,92,207,451,898,1860,3811,7749,15676,31621,63743,128215,257529,516991,1036982)} & \small{1} \\
 \small{(16,39,99,199,415,888,1827,3727,7606,15431,31231,63063,127025,255567,513647,1031252)} & \small{1} \\
 \small{(16,43,78,185,388,806,1695,3471,7151,14582,29701,60313,122112,246904,498200,1004074)} & \small{1} \\
 \small{(16,43,78,185,388,806,1695,3483,7164,14610,29761,60409,122299,247228,498808,1005154)} & \small{1} \\
 \small{(16,43,85,177,388,806,1684,3483,7125,14589,29686,60257,122078,246742,498010,1003674)} & \small{2} \\
 \small{(16,43,85,185,397,816,1706,3507,7190,14687,29866,60617,122690,247885,500043,1007304)} & \small{6} \\
 \small{(16,43,92,193,415,856,1772,3639,7424,15114,30646,62001,125189,252331,507947,1021344)} & \small{7} \\
 \small{(16,45,92,195,433,868,1805,3701,7515,15312,30961,62579,126226,254079,511101,1026776)} & \small{1} \\
 \small{(16,45,99,195,424,888,1816,3725,7580,15375,31141,62851,126668,254970,512507,1029296)} & \small{1} \\
 \small{(16,45,99,195,433,878,1805,3725,7541,15347,31066,62691,126481,254529,511785,1028126)} & \small{1} \\
 \small{(16,45,99,195,433,888,1816,3737,7580,15403,31171,62883,126787,255105,512754,1029776)} & \small{1} \\
 \small{(16,45,106,203,442,918,1860,3821,7736,15648,31636,63675,128113,257454,516706,1036526)} & \small{3} \\
 \small{(16,45,106,211,442,928,1893,3857,7827,15802,31846,64083,128776,258534,518625,1039616)} & \small{1} \\
 \small{(16,45,106,211,451,928,1893,3869,7827,15816,31876,64115,128861,258669,518815,1039976)} & \small{1} \\
 \small{(21,3,8,11,28,43,12,23,40,52,51,43,69,102,96,131)} & \small{1} \\
 \small{(21,3,8,23,28,43,12,27,40,52,36,55,69,102,96,115)} & \small{1} \\
 \small{(21,3,8,23,73,23,12,39,79,94,36,71,154,237,153,135)} & \small{1} \\
 \small{(21,9,22,19,25,73,45,81,118,150,237,307,460,621,894,1333)} & \small{1} \\
 \small{(21,25,50,105,244,501,1057,2173,4577,9318,19146,39033,79833,162610,331190,672481)} & \small{1} \\
 \small{(21,25,57,121,280,591,1277,2677,5617,11663,24171,49785,102239,209338,427786,871971)} & \small{1} \\
 \small{(21,31,78,165,343,711,1453,2987,6085,12370,24996,50757,102443,206959,416576,839475)} & \small{1} \\
 \small{(21,31,78,165,343,711,1475,2999,6111,12370,25041,50581,101967,205249,413080,829895)} & \small{1} \\
 \small{(21,36,64,115,211,383,694,1256,2276,4126,7479,13555,24566,44523,80694,146251)} & \small{7} \\
 \small{(21,37,50,101,187,321,573,1025,1808,3242,5787,10261,18310,32635,58084,103525)} & \small{1} \\
 \small{(21,37,50,109,196,321,606,1061,1873,3410,6027,10765,19347,34426,61561,110125)} & \small{1} \\
 \small{(21,37,50,109,196,321,606,1073,1899,3452,6132,10941,19653,35074,62720,112305)} & \small{1} \\
 \small{(21,37,57,105,196,341,617,1113,1977,3557,6387,11433,20520,36802,65969,118329)} & \small{1} \\
 \small{(21,37,57,113,205,361,661,1185,2133,3865,6957,12545,22645,40825,73645,132849)} & \small{8} \\
 \small{(21,37,78,189,388,821,1706,3521,7190,14708,29916,60669,122826,248140,500461,1008065)} & \small{1} \\
 \small{(21,37,85,197,424,871,1816,3713,7580,15373,31161,62901,126804,255178,512963,1030075)} & \small{1} \\
 \small{(21,37,92,197,415,861,1783,3653,7463,15170,30756,62197,125529,252937,509011,1023205)} & \small{1} \\
 \small{(21,37,92,197,415,871,1794,3677,7502,15254,30891,62453,125971,253729,510379,1025615)} & \small{1} \\
 \small{(21,37,92,197,424,871,1816,3713,7580,15366,31131,62853,126702,254980,512621,1029455)} & \small{1} \\
 \small{(21,37,99,205,424,881,1827,3725,7593,15401,31161,62909,126787,255124,512811,1029805)} & \small{1} \\
 \small{(21,37,99,205,442,911,1882,3833,7801,15765,31821,64061,128793,258580,518758,1039935)} & \small{1} \\
 \small{(21,39,64,127,223,403,749,1351,2471,4518,8217,15007,27371,49899,91049,166047)} & \small{3} \\
 \small{(21,39,71,131,241,443,815,1499,2757,5071,9327,17155,31553,58035,106743,196331)} & \small{30} \\
 \small{(21,40,92,193,409,851,1761,3612,7385,15030,30504,61745,124730,251527,506522,1018839)} & \small{1} \\
 \small{(21,42,99,211,436,903,1849,3770,7671,15543,31404,63331,127501,256362,514939,1033451)} & \small{1} \\
 \small{(21,43,71,201,415,841,1783,3639,7424,15093,30651,61977,125121,252313,507852,1021169)} & \small{1} \\
 \small{(21,43,92,201,415,861,1783,3651,7450,15156,30726,62137,125427,252763,508688,1022649)} & \small{3} \\
 \small{(21,43,99,209,433,901,1849,3771,7671,15541,31401,63329,127501,256363,514939,1033449)} & \small{13} \\
 \small{(21,43,99,209,442,911,1882,3819,7762,15695,31671,63793,128300,257722,517257,1037359)} & \small{2} \\
 \small{(21,43,106,217,442,911,1871,3807,7736,15646,31581,63641,128045,257290,516516,1036119)} & \small{1} \\
 \small{(21,43,106,217,460,941,1926,3903,7905,15940,32091,64489,129473,259666,520487,1042709)} & \small{2} \\
 \small{(21,45,99,195,424,833,1805,3729,7489,15221,31071,62771,126413,254376,511747,1027973)} & \small{1} \\
 \small{(21,45,99,207,424,893,1849,3757,7645,15515,31356,63247,127365,256122,514540,1032797)} & \small{1} \\
 \small{(21,45,99,207,433,903,1849,3769,7671,15543,31401,63327,127501,256365,514939,1033447)} & \small{1} \\
 \small{(21,45,99,211,442,903,1860,3797,7697,15599,31506,63491,127790,256860,515775,1034871)} & \small{1} \\
 \small{(21,45,106,219,451,933,1904,3869,7840,15830,31896,64171,128929,258777,519005,1040281)} & \small{2} \\
 \small{(21,45,106,223,469,953,1937,3937,7957,16012,32196,64703,129796,260145,521285,1044037)} & \small{1} \\
 \small{(21,45,113,231,478,963,1970,3985,8022,16117,32376,64983,130238,260856,522406,1045787)} & \small{1} \\
 \small{(21,45,113,231,478,963,1970,3985,8035,16131,32391,65015,130306,260964,522558,1046027)} & \small{1} \\
 \small{(21,46,92,197,418,861,1783,3662,7463,15184,30759,62245,125580,253054,509182,1023545)} & \small{1} \\
 \small{(21,46,99,209,436,901,1849,3774,7671,15541,31404,63329,127501,256366,514939,1033449)} & \small{3} \\
 \small{(21,48,106,219,454,933,1904,3872,7840,15830,31899,64171,128929,258780,519005,1040281)} & \small{2} \\
 \small{(21,51,106,211,442,933,1893,3851,7827,15788,31851,64083,128742,258522,518568,1039521)} & \small{1} \\
 \small{(21,51,106,211,460,933,1893,3875,7827,15816,31881,64115,128878,258666,518796,1040001)} & \small{1} \\
 \small{(21,51,113,227,460,963,1959,3959,7996,16075,32316,64867,130051,260610,521988,1045131)} & \small{1} \\
 \small{(21,51,113,227,469,963,1959,3971,7996,16089,32331,64883,130102,260673,522083,1045291)} & \small{1} \\
 \small{(21,51,120,235,478,973,1981,4007,8048,16166,32451,65083,130425,261168,522843,1046481)} & \small{2} \\
 \small{(26,3,8,71,46,28,12,63,105,94,41,183,171,228,153,180)} & \small{1} \\
 \small{(26,31,99,165,397,716,1486,3047,6306,12657,25781,52101,105044,211225,424309,852040)} & \small{1} \\
 \small{(26,33,85,223,460,918,1904,3865,7801,15795,31901,64127,128827,258684,518815,1039882)} & \small{1} \\
 \small{(26,33,92,231,478,968,1970,3973,8009,16138,32441,65063,130374,261114,522786,1046392)} & \small{1} \\
 \small{(26,33,106,219,433,878,1783,3569,7216,14472,29036,58267,116927,234141,468598,936986)} & \small{1} \\
 \small{(26,33,113,235,478,978,1981,3965,8048,16173,32456,65083,130408,261114,522862,1046486)} & \small{1} \\
 \small{(26,39,71,139,259,478,881,1631,3030,5617,10397,19259,35684,66117,122494,226926)} & \small{2} \\
 \small{(26,39,71,139,259,488,925,1691,3134,5855,10862,20203,37537,69717,129581,240776)} & \small{1} \\
 \small{(26,39,71,139,268,488,903,1679,3134,5827,10817,20075,37282,69276,128669,238976)} & \small{1} \\
 \small{(26,39,99,215,451,918,1904,3847,7788,15781,31826,63975,128691,258393,518245,1039082)} & \small{1} \\
 \small{(26,39,106,227,469,948,1937,3911,7931,15984,32171,64611,129643,259923,520962,1043456)} & \small{1} \\
 \small{(26,39,106,227,469,948,1937,3923,7931,15984,32171,64611,129643,259959,520962,1043456)} & \small{1} \\
 \small{(26,39,113,235,478,978,1981,3971,8048,16173,32456,65083,130408,261120,522862,1046486)} & \small{1} \\
 \small{(26,39,113,235,478,978,1981,3983,8048,16173,32456,65083,130408,261156,522862,1046486)} & \small{3} \\
 \small{(26,39,120,243,478,968,1970,3959,7996,16054,32201,64563,129541,259572,519784,1040476)} & \small{1} \\
 \small{(26,43,64,137,250,466,859,1551,2887,5370,9902,18265,33712,62260,115008,212394)} & \small{1} \\
 \small{(26,43,71,137,259,476,881,1623,3004,5587,10337,19129,35429,65599,121506,225024)} & \small{1} \\
 \small{(26,45,71,187,352,768,1640,3321,6969,14381,29201,59451,120718,243882,492595,994628)} & \small{1} \\
 \small{(26,45,78,147,277,518,969,1805,3368,6296,11762,21971,41039,76653,143185,267466)} & \small{2} \\
 \small{(26,45,85,227,469,928,1937,3917,7866,15921,32111,64483,129422,259695,520411,1042496)} & \small{1} \\
 \small{(26,45,92,207,433,858,1761,3589,7216,14486,29126,58495,117386,235341,471505,943882)} & \small{1} \\
 \small{(26,45,92,223,451,918,1904,3865,7801,15802,31871,64063,128793,258615,518625,1039662)} & \small{1} \\
 \small{(26,45,99,215,442,898,1860,3793,7684,15571,31481,63431,127671,256698,515490,1034342)} & \small{1} \\
 \small{(26,45,99,231,478,968,1970,3985,8022,16145,32441,65063,130374,261126,522805,1046412)} & \small{1} \\
 \small{(26,45,106,223,451,938,1926,3877,7866,15900,31976,64303,129201,259173,519651,1041422)} & \small{1} \\
 \small{(26,45,106,227,451,938,1915,3881,7866,15872,31946,64275,129099,259029,519461,1041026)} & \small{1} \\
 \small{(26,45,106,227,469,948,1937,3929,7931,15984,32171,64611,129643,259965,520962,1043456)} & \small{1} \\
 \small{(26,45,113,235,478,978,1981,3989,8048,16173,32456,65083,130408,261162,522862,1046486)} & \small{9} \\
 \small{(26,45,120,243,487,998,2014,4037,8126,16292,32636,65363,130833,261801,523831,1047946)} & \small{2} \\
 \small{(26,49,113,225,475,956,1959,3945,7983,16045,32282,64769,129949,260401,521703,1044604)} & \small{2} \\
 \small{(26,51,85,223,469,928,1904,3895,7853,15809,31931,64239,128946,258837,519195,1040492)} & \small{1} \\
 \small{(26,51,92,223,460,928,1904,3883,7840,15816,31916,64175,128912,258792,519024,1040252)} & \small{1} \\
 \small{(26,51,92,231,478,968,1970,3991,8035,16138,32441,65063,130374,261132,522824,1046432)} & \small{1} \\
 \small{(26,51,99,227,469,938,1937,3935,7918,15963,32156,64579,129558,259881,520829,1043166)} & \small{1} \\
 \small{(26,51,99,231,478,968,1970,3991,8035,16145,32441,65063,130374,261132,522824,1046432)} & \small{1} \\
 \small{(26,51,106,219,460,928,1893,3875,7827,15802,31886,64139,128861,258684,518853,1040016)} & \small{1} \\
 \small{(26,51,106,227,469,948,1937,3935,7931,15984,32171,64611,129643,259971,520962,1043456)} & \small{3} \\
 \small{(26,51,106,231,469,958,1948,3955,7983,16054,32276,64807,129966,260457,521760,1044762)} & \small{1} \\
 \small{(26,51,113,227,469,968,1948,3947,7983,16047,32261,64787,129915,260403,521665,1044576)} & \small{2} \\
 \small{(26,51,113,231,478,968,1970,4003,8048,16145,32456,65079,130391,261168,522843,1046452)} & \small{1} \\
 \small{(26,51,113,235,478,978,1981,3995,8048,16173,32456,65083,130408,261168,522862,1046486)} & \small{7} \\
 \small{(26,51,113,235,484,978,1981,3995,8048,16173,32462,65083,130408,261168,522862,1046486)} & \small{4} \\
 \small{(26,51,113,239,487,978,1992,4027,8087,16215,32546,65231,130612,261483,523375,1047242)} & \small{1} \\
 \small{(26,51,113,239,487,988,2003,4039,8113,16271,32621,65343,130799,261771,523793,1047892)} & \small{2} \\
 \small{(26,51,120,243,478,978,1992,4007,8061,16194,32486,65139,130510,261312,523071,1046826)} & \small{1} \\
 \small{(26,51,120,243,487,988,2003,4031,8100,16250,32576,65267,130697,261609,523527,1047496)} & \small{3} \\
 \small{(26,51,120,243,487,998,2014,4043,8126,16292,32636,65363,130833,261807,523831,1047946)} & \small{4} \\
 \small{(26,55,120,245,496,1006,2025,4063,8152,16332,32696,65445,130952,261982,524078,1048298)} & \small{2} \\
 \small{(26,58,120,245,499,1006,2025,4066,8152,16332,32699,65445,130952,261985,524078,1048298)} & \small{1} \\
 \small{(31,3,8,7,19,63,12,19,27,38,46,39,52,75,77,147)} & \small{1} \\
 \small{(31,15,8,11,10,83,34,39,40,52,151,107,137,168,210,423)} & \small{1} \\
 \small{(31,15,8,11,10,93,34,39,40,52,211,107,137,168,210,653)} & \small{1} \\
 \small{(31,21,8,11,10,73,199,69,40,52,136,203,222,192,210,433)} & \small{1} \\
 \small{(31,33,8,11,10,123,166,81,40,52,226,475,460,258,210,883)} & \small{1} \\
 \small{(31,45,127,255,487,1023,2047,4041,8191,16383,32671,65535,131071,261945,524287,1048575)} & \small{1} \\
 \small{(31,51,127,255,505,1023,2047,4083,8191,16383,32761,65535,131071,262131,524287,1048575)} & \small{1} \\
 \small{(31,54,127,255,508,1023,2047,4086,8191,16383,32764,65535,131071,262134,524287,1048575)} & \small{1} \\
 \small{(31,55,127,253,505,1011,2047,4087,8191,16381,32761,65533,131071,262135,524287,1048563)} & \small{1} \\
 \small{(31,58,127,253,508,1021,2047,4090,8191,16381,32764,65533,131071,262138,524287,1048573)} & \small{1} \\
 \small{(31,61,127,253,511,1021,2047,4093,8191,16381,32767,65533,131071,262141,524287,1048573)} & \small{23} \\
 \small{(31,63,127,255,511,1023,2047,4095,8191,16383,32767,65535,131071,262143,524287,1048575)} & \small{9}
\end{longtable}

\end{document}